\documentclass[10pt,journal,compsoc]{IEEEtran}

\usepackage{amsfonts}
\usepackage[cmex10]{amsmath}
\allowdisplaybreaks[1]  
\usepackage{amsthm}
\usepackage{amssymb}
\usepackage{mdwmath}
\usepackage{mathtools}
\usepackage{cases}
\usepackage{empheq}

\theoremstyle{definition}

\newtheorem{theorem}{Theorem}

\usepackage{tabularx}
\usepackage{array}
\usepackage{mdwtab}
\usepackage{multirow}
\usepackage{booktabs}  
\usepackage{diagbox} 
\usepackage{makecell}

\usepackage{textcomp}
\def\BibTeX{{\rm B\kern-.05em{\sc i\kern-.025em b}\kern-.08em
T\kern-.1667em\lower.7ex\hbox{E}\kern-.125emX}}

\usepackage[pdftex]{graphicx}
\usepackage{caption}  
\usepackage[labelformat=simple]{subcaption}
\graphicspath{ {./img/} }

\usepackage{cite}

\usepackage{makeidx}
\usepackage{ragged2e}

\usepackage{algorithm}
\usepackage{algorithmic}

\usepackage{hyperref} 
\usepackage{cleveref} 

\newcolumntype{C}[1]{>{\centering\arraybackslash}m{#1}}

\hypersetup{colorlinks = true, 
	    linkcolor = black, 
	    urlcolor = black,
        citecolor = black} 

\begin{document}

\bstctlcite{IEEEexample:BSTcontrol}  

\title{\huge{AC$^2$P$^2$SL: Adaptive Communication-Computation Pipeline Parallel Split Learning over Edge Networks}}

\author{
\IEEEauthorblockN{
    Chenyu~Liu, 
    Zhaoyang~Zhang, 
    Zirui~Chen, 
    Zhaohui~Yang,
    Chunhui~Feng,
    and Tony Q. S. Quek
}

\IEEEcompsocitemizethanks{\IEEEcompsocthanksitem
A preliminary part of this work was presented at IEEE Globecom 2025 Workshop on A4E: AI/ML for Edge/Fog Networks \cite{C2P2SL}.
\IEEEcompsocthanksitem
This work was supported in part by National Natural Science Foundation of China under Grants 62394292 and 624B2129, Zhejiang Provincial Key R\&D Program under Grant 2023C01021, and the Fundamental Research Funds for the Central Universities under Grant 226-2024-00069. 
(\textit{Corresponding author: Zhaoyang Zhang}.)
\IEEEcompsocthanksitem
C.~Liu, Z.~Zhang, Z.~Chen, Z.~Yang, and C.~Feng are with the College of Information Science and Electronic Engineering, Zhejiang University, Hangzhou 310027, China, 
and also with the Zhejiang Provincial Laboratory of Multi-Modal Communication Networks and Intelligent Information Processing, Hangzhou 310027, China
(e-mail: chenyuliu@zju.edu.cn; ning\_ming@zju.edu.cn; ziruichen@zju.edu.cn; yang\_zhaohui@zju.edu.cn; chunhuifeng@zju.edu.cn).
\IEEEcompsocthanksitem
Tony Q. S. Quek is with Singapore University of Technology and Design,
Singapore 487372 (e-mail: tonyquek@sutd.edu.sg).
}
}


\IEEEtitleabstractindextext{%
\begin{abstract}
\justifying 
In wireless edge networks, split learning (SL) enables base station (BS) to utilize the distributed data and computing power across user equipments (UEs) to achieve collaborative model training while protecting local data privacy. However, the inherent sequential execution of computation and communication processes in conventional SL usually leads to long training times. To overcome this limitation, this paper proposes an adaptive communication-computation pipeline parallel split learning (AC$^2$P$^2$SL) framework. By conceptualizing the communication and computation processes of UEs and the BS as a unified pipeline, AC$^2$P$^2$SL achieves fine-grained pipeline parallelism across multiple micro-batches. Through this approach, effective overlapping of communication and computation is achieved which results in significant reduction of the overall training latency. Moreover, by considering the system constraints in the communication, computation, and storage dimensions as well as the heterogeneity of UEs, we formulate a joint optimization problem to minimize the training time and propose a corresponding split and pre-allocation algorithm to further enhance the pipeline efficiency. Additionally, accounting for the practical dynamic environments for the UEs, we design an adaptive re-allocation strategy to enhance the system resilience. Extensive experimental results demonstrate the effectiveness and robustness of AC$^2$P$^2$SL in reducing training time while ensuring data privacy preservation. 
\end{abstract}

\begin{IEEEkeywords}
Split learning, communication-computation pipeline parallelism, wireless edge network, resource allocation.
\end{IEEEkeywords}}

\maketitle

\section{Introduction} \label{sec:introduction}

In the sixth-generation (6G) era, deep integration of communication networks and  artificial intelligence (AI) has become an inevitable trend \cite{wBAIM, DL}. In edge network scenarios, such as anomaly detection monitoring \cite{Monitor, Detection}, unmanned aerial vehicle (UAV) image capturing\cite{WDL, UAV, UAVGeo}, and vehicle autonomous driving\cite{VUSFL, VEC}, raw data for neural network training are often stored dispersedly on local user equipments (UEs). Currently, systems typically collaborate with multiple UEs to train a common model, thereby fully leveraging their individual data to enrich the overall training dataset. With the rapid development of deep learning (DL) technologies, the parameter scale and computational load of AI models are consistently increasing \cite{BAIM, law}. However, UEs in edge networks are typically resource-constrained personal devices, and relying solely on them to collaboratively train complex models presents multiple deficiencies in terms of computing power, storage, and communication capabilities\cite{device, Hivemind}. 

Meanwhile, wireless systems are also evolving, transitioning from a pure communication-oriented approach to an integrated communication and computing paradigm \cite{RAN}. For instance, AI-enabled radio access network (AI-RAN) architecture \cite{AI-RAN} has transformed traditional base stations (BS) into intelligent AI processing nodes, providing additionally computational resources for edge AI tasks. Consequently, a straightforward approach is to centralize data from UEs at BS for model training, yet this raises numerous privacy-related concerns \cite{DL, multihop}. By contrast, split learning (SL) \cite{SL, healthSL, DCSFL}, has emerged as an effective solution, where the model is split into two parts: the first few layers are distributed to the UEs, while the main body of the model is retained on the BS. For the split point, system utilizes wireless transmission to exchange forward propagation (FP) activations and backward propagation (BP) gradients between computing devices \cite{WSL}.
By splitting the model, the computational load is apportioned to the UEs and the BS, ensuring that computations directly involving user data are performed locally on the UEs without being exposed to the BS.

\begin{figure*}[t] 
\centering
\includegraphics[width=0.74\linewidth]{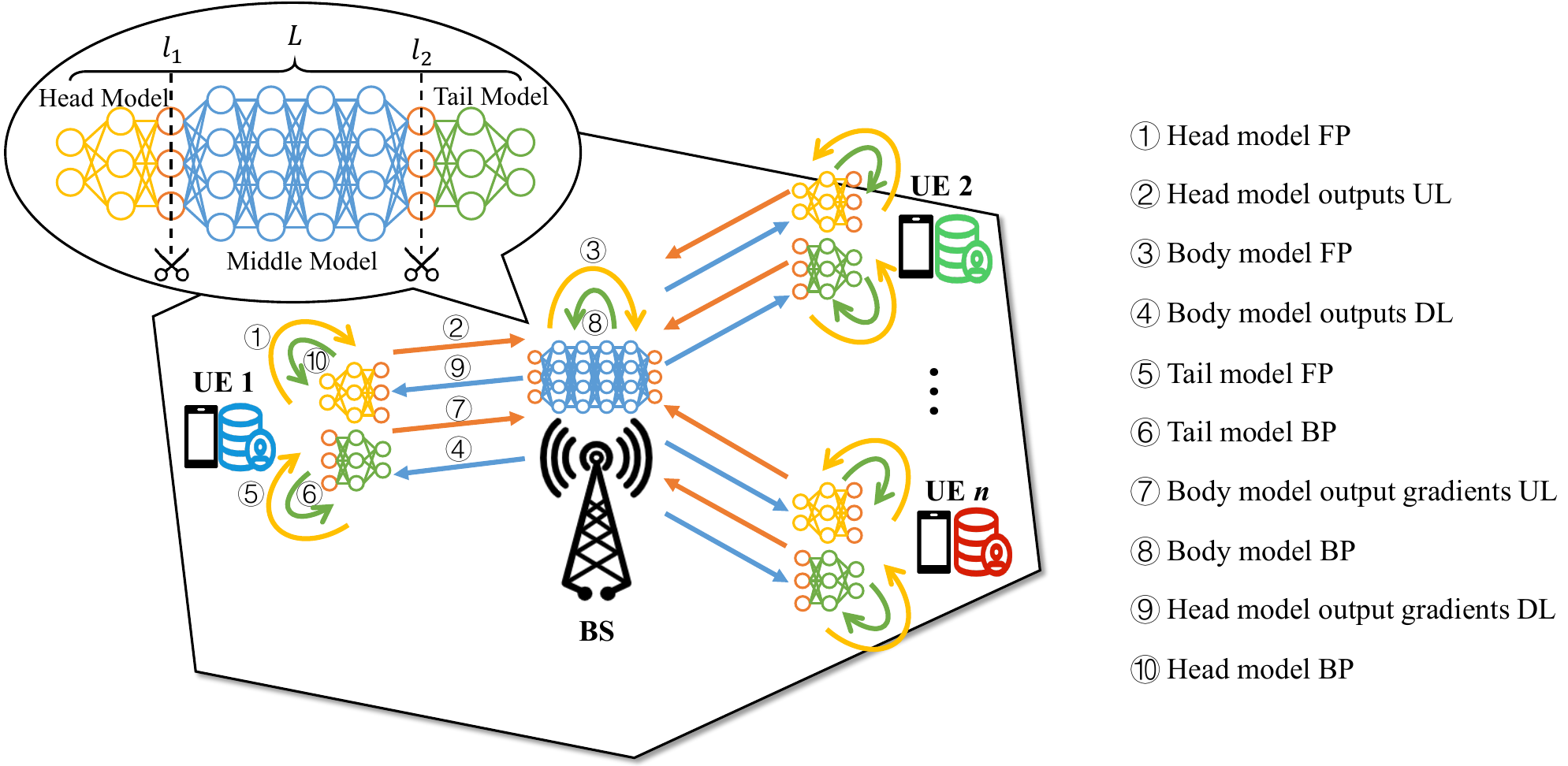}
\caption{U-shaped parallel split learning over wireless edge networks.}
\label{framework}
\end{figure*}

In many supervised learning tasks, such as image recognition \cite{USL} and wireless localization \cite{AL}, in addition to user data, the privacy of label information often also requires protection. To meet this demand, \cite{SL, USL, UPSL} introduces U-shaped SL (USL) characterized by a more ingenious two-layer cut strategy, which splits model into three sub-models: head, body, and tail models. UEs locally train head and tail models, while the BS trains the body model. 
This necessitates the BS serving solely as a computational unit for the intermediate model layers, without accessing either the input data or the final outputs/labels, thereby ensuring that both data, model outputs and ground-truth labels are strictly retained locally. However, while protecting privacy, SL and USL also introduce additional latency caused by wireless transmission\cite{splitMAC}, making the improvement of the overall temporal efficiency of the architecture an important research topic.

Furthermore, to facilitate simultaneous multi-user participation in SL, the split federated learning approach in \cite{SFL, ASFL, PipeSFL} extends the aforementioned framework by introducing the parallel model training mechanism among users. 
However, these methods typically require an additional aggregation server \cite{PipeSFL}, thereby introducing extra synchronization overhead and posing potential privacy risks. From another perspective, \cite{CPSL} partitions UEs into multiple clusters to implement intra-cluster parallel training and inter-cluster sequential training, partly reducing training latency. Furthermore, \cite{UPSL} introduced parallel split learning for multiple UEs, while \cite{SGLR, EPSL} reduces the dimensionality of BP by aggregating the gradients of the BS-side layer. These approaches eliminate the need for UE-side model synchronization, thus further decreasing communication time\cite{PSL}. 

{Moreover, in edge network SL, the selection of the model split layer significantly affects training efficiency. Additionally, intrinsic heterogeneity among terminal devices in computation and communication capabilities induce substantial synchronization latency during the training process. Building upon the proposed SL schemes, \cite{CPSL} designed a joint optimization strategy for layer split selection and resource allocation to minimize training costs. \cite{ASFL} and \cite{UPSL} jointly optimized the split layer selection and bandwidth allocation problems within the frameworks, respectively, to minimize training latency. Furthermore, \cite{EPSL} additionally introduced power control to balance the trade-off with energy consumption. However, in practical heterogeneous edge networks, not only do intrinsic performance disparities exist among devices, but the communication capabilities of UEs also fluctuate due to mobility and other factors, while their computational resources remain unstable due to concurrent local computing tasks. The synchronization latency induced by these varying factors significantly degrades overall training efficiency.}

For the communication overhead during the training process, above approaches have primarily focused on reducing data transmission time via operations such as UE-side parallelism and data aggregation. However, as shown in Figure \ref{framework}, for a single data batch, the computation and communication processes remain serially executed. From head model FP to head model BP, subsequent stages needs to wait for the completion of the previous stages, inevitably introducing idle time across the various processing stages. In scenarios involving models with high-dimensional intermediate outputs or poor channel conditions, the prohibitive communication overhead results in low training efficiency. To address these limitations, we notice that the serial situation can be significantly optimized through fine-grained parallelism, similar to pipeline parallelism mechanism \cite{GPipe} in distributed training. { By partitioning data batches into multiple micro-batches, each node propagates the results to the subsequent stage immediately after completing its sub-model computation on the current micro-batch, while concurrently processing intermediate outputs from the preceding stage for subsequent micro-batches. Stages in the computation pipeline can be parallelized through the overlap between micro-batches. Although recent works have introduced pipeline parallelism into SL, their scopes remain limited. Specifically, \cite{WPP} focuses exclusively on intra-server computational pipelining, while \cite{GAN} implements pipeline parallelism for the fixed stages within federated split learning framework. Critically, neither approach accounts for the inherent parallelizability between the communication and computation processes in wireless SL. This neglect of communication-computation parallelism restricts the overall training efficiency in dynamic wireless edge environments.}

In this paper, building upon USL, we integrate the communication pipeline consisting of uplink and downlink data transmission with the computation pipeline comprising sub-models' FP and BP into a unified training pipeline. Treating UEs' computation, uplink transmission, BS's computation, and downlink transmission as distinct processing stages, we also achieve fine-grained micro-batch parallelism by partitioning data batches. This approach enables wireless transmission to occur concurrently with local model computations on both UE and BS sides to achieve parallel training. 
Moreover, in response to heterogeneous capability, time-varying communication quality, and real-time availability of computational resources in practical edge networks, we formulate the pipeline training time optimization problem to design the split and pre-allocation (SPA) algorithm as well as the adaptive re-allocation (ARA) strategy. 
Consequently, the above content collectively constitutes an adaptive communication-computation pipeline parallel split learning (AC$^2$P$^2$SL) framework in this paper.
By effective parallelism between training stages, the proposed framework significantly reduces pipeline training time and enhances overall efficiency.
The main contributions of this paper are summarized as follows: 

\begin{itemize}
    \item We propose the AC$^2$P$^2$SL framework, which combines data transmission with computation tasks to achieve communication-computation pipeline parallelism between micro-batches by splitting data batches.
    \item We formulate a joint optimization problem to minimize the pipeline training time based on the computation, communication, and memory constraints of UEs and design the SPA algorithm for its solution.
    \item We implement an adaptive ARA strategy before each pipeline training round to adjust the micro-batch quantity, allocated batch size and time slot in response to significant changes in UE performance.
    \item We comprehensively evaluate the effectiveness and robustness of AC$^2$P$^2$SL through extensive experiments under different models and system parameters. We also conduct ablation studies to further validate the efficacy and principle of our resource allocation strategies.
\end{itemize}

The remainder of this paper is organized as follows. 
Section \ref{section2} introduces the system communication, computation, and storage model. 
In Section \ref{section3}, we present the overview of AC$^2$P$^2$SL and its training workflow. 
Then, problem formulation and solution approach are presented in Section \ref{section4}.
Next, Section \ref{section5} provides performance evaluations of our proposed scheme. 
Finally, we conclude our work in Section \ref{section6}.

\section{System Model}\label{section2}
We consider an edge cellular network consisting of a central BS and a set of $n$ UEs distributed within the coverage area. Table \ref{notations} summarizes the main notations of this paper.

\begin{itemize}
    \item \textbf{BS:} By Integrating AI processing units, the BS is equipped with sufficient computational resources to train large models. Beyond performing wireless signal transmission, the BS is responsible for collecting computational and communication information from UEs to make optimizations during the training process.
    
    \item \textbf{UE:} As resource-constrained edge devices, each UE has weak computational capability to train tiny neural networks. {Let $\mathcal{N}=\{1, \dots, N\}$ denote the set of all devices. The UEs locally store private datasets,} where $D_{i,j}$ represents the local data samples of the $i$-th UE during the $j$-th training batch.
\end{itemize}

\begin{table}[t]
  \centering
  \caption{Summary of Main Notations} \label{notations}
  \setlength{\extrarowheight}{0.3em}
  \begin{tabular}{lp{6.3cm}} 
    \toprule
    \textbf{Notation} & \textbf{Description} \\ 
    \midrule 
    $l_1$, $l_2$   &   Cutting layers of head, body, and tail sub-models\\
    $a_l$   &   Data size of the $l$-th layer’s activation output\\
    {$B$, $b_i$}   &   {Total and UE $i$'s batch size of one training round}\\
    $f_i$, $f_0$   &   Attainable computing performance of UE $i$ and BS\\
    $I_i$,$I_0$   &   Operational intensity of UE $i$ and BS\\
    $F_i$, $F_0$   &   Peak FLOPS of UE $i$ and BS\\
    $\beta_i$, $\beta_0$   &   Maximum memory bandwidth of UE $i$ and BS\\
    $c_l^f$, $c_l^b$   &   FP and BP computing workload per sample of the $l$-th layer\\
    $m_l^f$, $m_l^b$   &   Fixed memory access for FP and BP of the $l$-th layer\\
    $\Delta m_l^f$, $\Delta m_l^b$   &   Activation memory access for FP and BP per sample of layer $l$\\
    $u_l, \Delta u_l$   &   Fixed and per sample activation memory of the $l$-th layer\\
    $U_i$   &   Maximum memory limit of UE $i$\\
    $\tau$, $T$   &   Length of time slot and time frame\\
    $s_i$, $\rho$   &   Number of time slots for UE $i$ and ratio of uplink to downlink time slots\\
    {$BW$}   &   {System bandwidth}\\
    $G_i$, $G_0$   &   Effective antenna gain of UE i and BS\\
    $p_i$, $p_0$   &   Transmit power of UE $i$ and BS\\
    $h_i$   &   Channel gain between UE $i$ and BS \\
    $N_0$   &   Power spectral density of noise\\
    $k$   &   Number of micro-batches in a single data batch\\
    $x(i,j)$, $y(i,j)$   &   The $j$-th micro-batch data samples and labels of UE $i$\\
    ${a/g}_{h/t}(i,j)$   &   Head/tail model activation output/gradient of the $j$-th micro-batch of UE $i$\\
    $a_b(j)$, $g_b(j)$   &   Body model activation output and gradient of the $j$-th micro-batch of BS\\
    {$W_h(i)$, $W_t(i)$}   &   {Head, and tail model parameter of UE $i$}\\
    {$W_b$}   &   {Body model parameter of BS}\\
    {$\mathcal{F}(\cdot)$, $\mathcal{B}(\cdot)$}   &   {Forward and backward propagation mapping}\\
    {$\mathcal{L}(\cdot)$, $l_{i,j}$}   &   {Loss function and $j$-th micro-batch loss of UE $i$}\\
    \bottomrule
  \end{tabular}
\end{table}

\subsection{Computation Model}\label{section2.1}

The target model has a total of \(L\) layers, which are partitioned into head model, body model, and tail model before training. The indices of two cut layers are respectively denoted as $l_1, l_2 \in\{1,2,\ldots,L\}$. In this way BS retains the sub-model containing the majority of parameters, while broadcasting the smaller head and tail sub-models to individual UEs. During one training round, the batch size of $i$-th UE's input data is denoted as \(b_i\), and the total batch size of all UEs is given as
{\begin{equation}\label{batchbound}B=\sum_{i=1}^{N}b_i.\end{equation}}

In contrast to previous studies that ideally treat device computational capability as a static constant\cite{C2P2SL}, we use the Roofline model \cite{Roofline} to provide a more realistic characterization. This model formulates the attainable computing performance as a piecewise function determined by the hardware's peak floating point operations per second (FLOPS), memory bandwidth, and the data operational intensity. Here, operational intensity is defined as the ratio of the floating point operations (FLOPs) to the amount of memory access required during the computation process.

For layer $l$, the total memory access is composed of fixed parameter memory access $m_l^{f/b}$ and per sample memory access $\Delta m_l^{f/b}$, a variable component proportional to the number of input data samples. Depending on the context of FP or BP, the parameter $m_l^{f/b}$ and $\Delta m_l^{f/b}$ takes the value of $m_l^f,m_l^b$ or $\Delta m_l^f,\Delta m_l^b$, respectively. Consequently, for the varying lower layer bound $l_{L}$, upper layer bound $l_{U}$ of sub-model and the number of input data samples $D$, the operational intensity of the $i$-th device for FP or BP is expressed as 
\begin{equation}
    I^{f/b}(D,l_{L},l_{U}) = \frac{D \sum\limits_{l=l_{L}}^{l_{U}}c_l^{f/b}}{\sum\limits_{l=l_{L}}^{l_{U}} m_l^{f/b} + D \sum\limits_{l=l_{L}}^{l_{U}} \Delta m_l^{f/b}},
\end{equation}
where the superscript $f/b$ of operational intensity indicates whether the current computing task is FP or BP. $\sum_{l=l_{L}}^{l_{U}}c_l^{f/b}$ denote the FLOPs required for a single data sample. When inputting a single data sample into the neural network, the FLOPs of $l$-th layer for FP and BP are denoted by \(c_{l}^{f}\) and \(c_{l}^{b}\). 

Furthermore, we define the borderline operational intensity as $I_i^{\max} = \frac{F_i}{\beta_i}$, where $F_i$ represents the peak FLOPS and $\beta_i$ denotes the memory bandwidth. When the operational intensity falls below this threshold, the task is classified as memory-bound, which implies that the performance is constrained by the memory bandwidth, i.e. $f^{f/b}_i = \beta_i I^{f/b}(b_i,l_{L},l_{U})$. Conversely, when the intensity exceeds the threshold, the task is compute-bound, limited by the peak FLOPS, i.e. $f^{f/b}_i = F_i$. Therefore, the attainable computational capability of the $i$-th device for FP or BP can be summarized as

\begin{equation}\label{roofline}
    \!f^{f/b}_i(b_i,l_{L},l_{U}) = \min\! \left(\!F_i, \frac{\beta_i b_i \sum\limits_{l=l_{L}}^{l_{U}}c^{f/b}_l}{\sum\limits_{l=l_{L}}^{l_{U}} \! m^{f/b}_l + b_i \sum\limits_{l=l_{L}}^{l_{U}} \! \Delta m^{f/b}_l}\! \right)\!.
\end{equation}

\begin{figure*}[!t]
    \centering
    \includegraphics[width=0.8\textwidth]{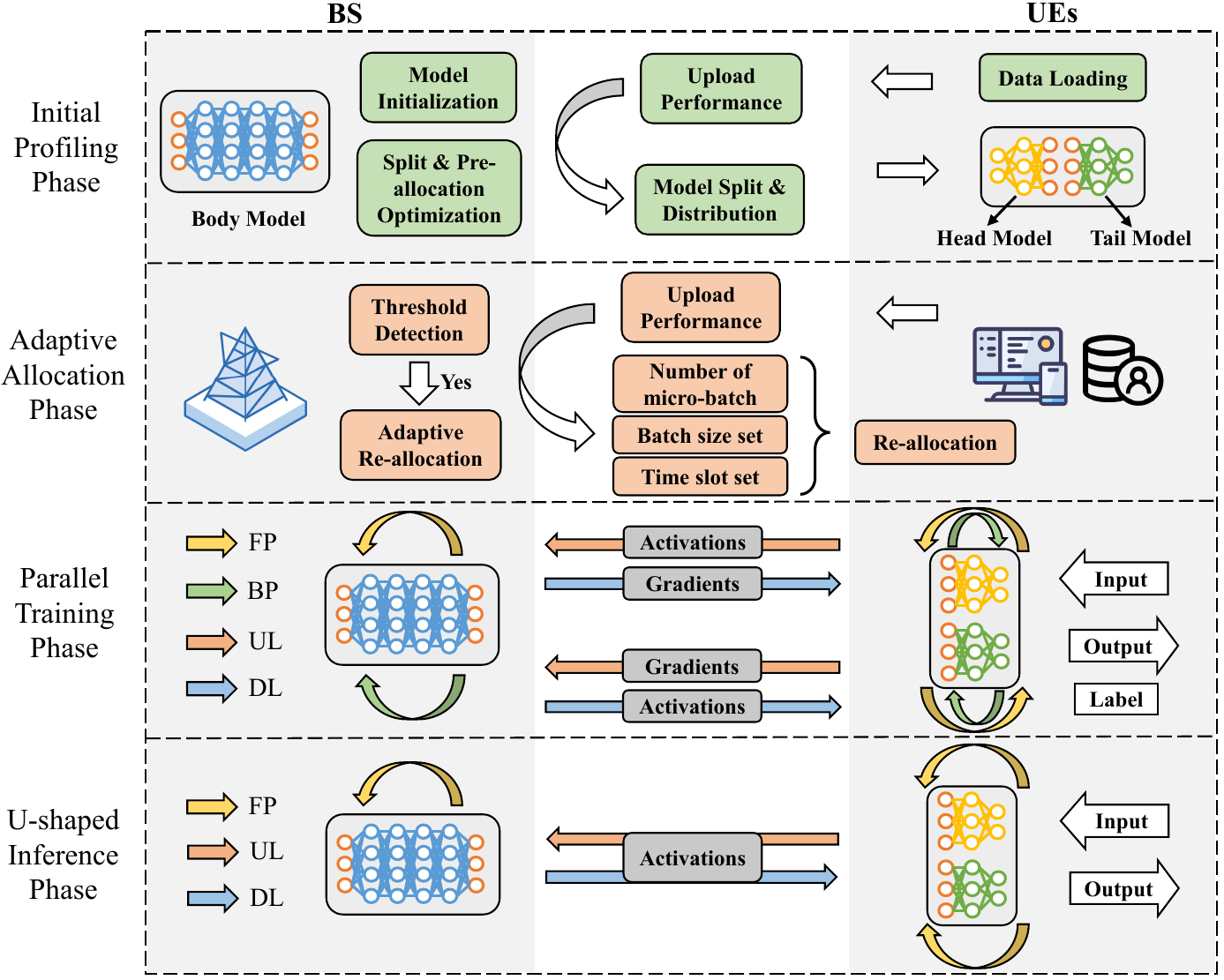}
    \caption{{System overview of AC$^2$P$^2$SL.}}
    \vspace{-0.2cm}
    \label{overview}
\end{figure*}

\subsection{Communication Model}\label{section2.2}

Note that the activation gradients generated during BP have the same dimension as the activation outputs produced during FP for the same network layer, we uniformly denote the activation output data size of $l$-th layer as $a_l$. Consequently, regarding the communication interface required between the UEs and the BS, the transfer loads at the two cut layers $l_1,l_2$ are expressed as $a_{l_1}$ and $a_{l_2}$, respectively.

To address the instability of UE performance in practical training scenarios, we consider a time division duplexing (TDD) communication system with time division multiple access (TDMA)\footnote{Both TDMA and FDMA are merely methods for modeling communication rates and do not affect overall AC$^2$P$^2$SL framework.}, which facilitates dynamic resource allocation to accommodate discontinuous, time-varying, and asynchronous transmission of UEs. Specifically, TDMA partitions time into periodic frames of length $T$, with each frame further subdivided into multiple time slots of length $\tau$. Each slot is assigned to a specific UE for data transmission, while the number of time slots allocated to the $i$-th UE is denoted by $s_i$. By dynamically scheduling resources in the time domain, this mechanism effectively reduces inter-user interference, thereby ensuring parallel and reliable data transmission. Consequently, the constraint relating the time frame and the allocated time slots is expressed as

{
\begin{equation}
\label{slotbound}
\sum_{i=1}^N\tau s_i \le T.
\end{equation}
}

It is noteworthy that $\rho$ is defined as the ratio of the number of uplink time slots to that of downlink time slots. We adopt the quasi-static block fading channels, holding that the UE channels remain approximately stable over the short duration of a single data batch training. This implies that the channel gain remains constant within each communication round, but may vary across different data batches. Accordingly, based on the Shannon's theorem, the achievable uplink and downlink transmission rates of the $i$-th UE and BS are formulated as

{\begin{equation}r_i^u=\frac{\tau s_i\rho}{T(1+\rho)}BWlog_2\left(1+\frac{G_iG_0p_i h_i}{BW N_0}\right),\end{equation}
\begin{equation}r_i^d=\frac{\tau s_i}{T(1+\rho)}BWlog_2\left(1+\frac{G_0G_ip_0 h_i}{BW N_0}\right),\end{equation}
where $BW$ represents the total system bandwidth shared by all UEs,} $p_i$ denotes the uplink transmit power of the $i$-th UE, and $p_0$ indicates the downlink transmit power of the BS. $G_i$ and $G_0$ stands for the antenna gain of the $i$-th UE and BS while $N_0$ denotes the power spectral density (PSD) of the noise. Finally, $h_i$ represents the channel gain for the $i$-th UE, which includes path loss and shadow fading, as well as multipath fading.

\subsection{Storage Model}\label{section2.3}

For UEs in the wireless edge network, the constrained storage resources necessitate imposing limits on the sizes of the partitioned head and tail models. Similarly to the memory access analysis in Section \ref{section2.1}, we categorize the memory usage of each layer into two components: the fixed model parameter usage $u_l$, and the variable intermediate activation usage per sample $\Delta u_l$. Consequently, taking into account both the head and tail models, the total memory footprint of UE $i$ is constrained by
\begin{equation}
    \sum_{l=1}^{l_1} (u_l + b_i \Delta u_l) + \sum_{l=l_2+1}^{L} (u_l + b_i \Delta u_l) \le U_i,
\end{equation}
where $U_i$ represents the maximum memory capacity of UE $i$. The storage constraint imposes a strict upper bound on both the split layer $l_1,l_2$ and the allowable batch size $b_i$. Conversely, for the BS, we assume it possesses sufficient storage resources for entire model training.

\section{AC$^2$P$^2$SL Framework and Training Workflow } \label{section3}
In this section, we present the system overview of the AC$^2$P$^2$SL framework across its various phases, along with the parallel training workflow.

\subsection{System Overview}\label{section3.1}
As illustrated in Fig.~\ref{overview}, The overall AC$^2$P$^2$SL framework comprises four primary phases: initial profiling, adaptive allocation, parallel training and U-shaped inference phase.

In the initial profiling phase, the local data on UEs is loaded while BS initializes the entire model. Subsequently, each UE uploads its communication and computation performance to the BS. Based on the parameters, the BS makes the SPA optimization which is presented in Section \ref{section4.2} to determine the near-optimal model split layer, number of micro-batches, batch size set, and allocated time slots set. Upon completing this optimization, the BS distributes the partitioned head and tail sub-model, along with the optimization results, to the UEs.

The adaptive allocation phase occurs before each training round to dynamically allocate computing workloads and communication resources. Specifically, the BS monitors the performance parameters uploaded by the UEs. If variations exceed a predefined threshold $\delta$, the ARA optimization presented in Section \ref{section4.3} concerning the number of micro-batches, batch size, and time slot is triggered to maximize operational efficiency. Furthermore, in scenarios where specific system nodes encounter failures and cease participation, this adaptive allocation functions as an elastic fault-tolerance mechanism, mitigating the adverse effects of UE heterogeneity and temporal variations.

In the training workflow, serial execution of computation and communication for a single batch inevitably introduces idle time across the various processing stages. Motivated by pipeline parallelism, we adopt a similar strategy. Specifically, UEs split a single batch uniformly into $k$ micro-batches, where each micro-batch consists of $b_i/k$ data samples. This partitioning enables UE to train a sequence of micro-batches continuously without increasing the memory footprint required for each training round. By utilizing the idle time of a single micro-batch, different stages concurrently process other micro-batch. Thus, AC$^2$P$^2$SL achieves micro-batch level parallelism that effectively overlaps computation with communication.

The U-shaped inference phases depicted in Fig.~\ref{overview} illustrate the USL process for a single UE-BS pair. In the subsequent inference phase, UE can execute inference tasks while keeping both input data and output results local without model parameter transmission, thus achieving comprehensive privacy protection. In practice, the proposed AC$^2$P$^2$SL framework supports parallel split inference across multiple UEs. By leveraging data batch partitioning and communication-computation pipeline parallelism in the same way, the framework achieves efficient edge inference. 

\begin{figure*}[!t]
    \centering
    \includegraphics[width=0.9\textwidth]{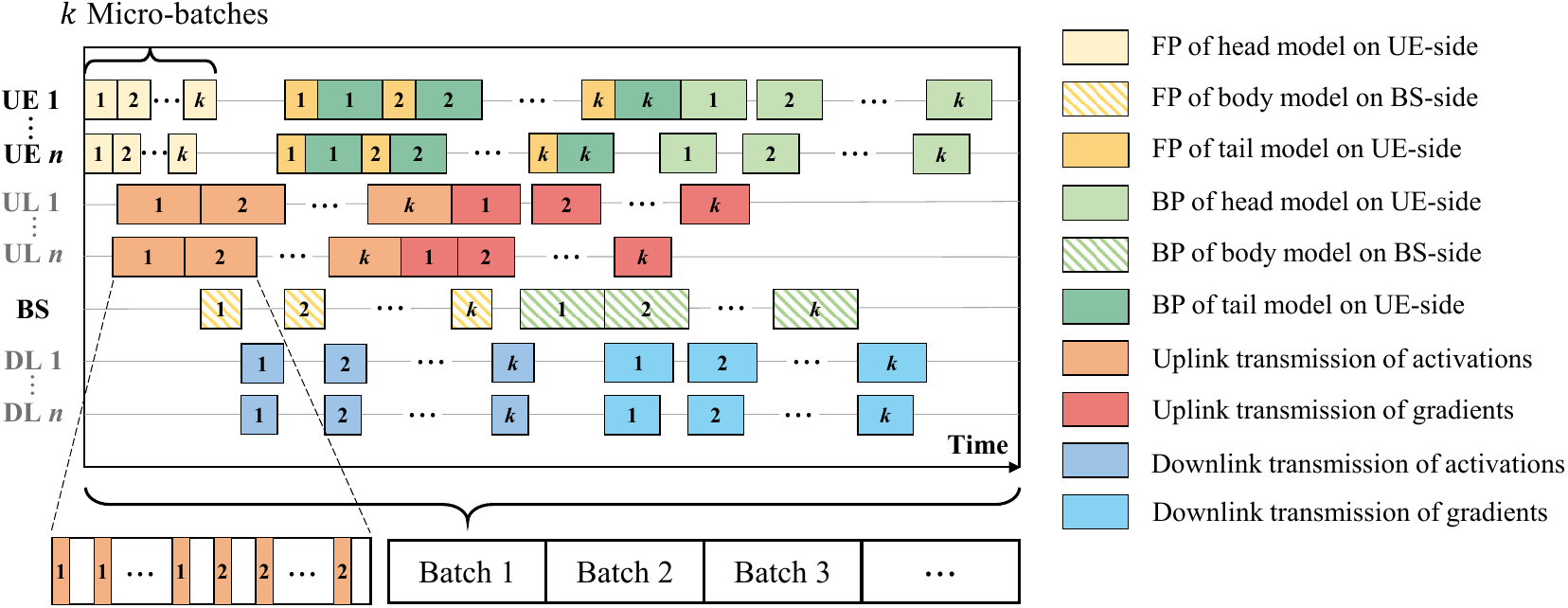}
    \caption{Training workflow of AC$^2$P$^2$SL. The uplink and downlink transmission block of different UEs consist of multiple non-overlapping time slots.}
    \label{workflow}
\end{figure*}

\subsection{Parallel Training Workflow} \label{section3.2}
Following the Initial profiling and adaptive resource allocation phases, the workflow of the U-shaped training phase is illustrated in Fig.~\ref{workflow}. For convenience, we describe the training round for a single data batch, which includes the parallel training of $k$ micro-batches. In the figure, the vertical axis sequentially represents four distinct physical stages: multi-UE computation, multi-UE uplink transmission, BS computation, and multi-UE downlink transmission, while the horizontal axis depicts the timeline as each data micro-batch goes through these stages. Next, we introduce the micro-stages that occur in the different physical stages:

\subsubsection{Head Model FP and Output Uplink Transmission}
The process begins with the parallel computation of UEs. Each UE concurrently inputs its $k$ local data micro-batches, denoted as $x_{i,j}$, into the head model for FP sequentially. These tasks are enqueued in the computation buffer. Through FP, the activation output of head model is generated as
{\begin{equation}\label{eq1}
a_h(i,j) = \mathcal{F}(W_h(i), x(i,j)).    
\end{equation}}
Upon completing FP of the current micro-batch, the activation output $a_h(i,j)$ is immediately added to the transmission queue, awaiting transmission during the allocated idle time slot in the uplink. Noting that due to heterogeneity among UEs, variations in computational and communication capabilities lead to distinct FP and uplink transmission durations, thereby inducing synchronization delays.

\subsubsection{Body Model FP and Output Downlink Transmission}
Once receiving the activation outputs of the $j$-th micro-batch from all UEs, the BS aggregates and concatenates them along the batch dimension, as 
\begin{equation}\label{eq2}
a_{h}(j) = [a_{h}(1,j); a_{h}(2,j); \dots; a_h(n,j)].    
\end{equation}
Subsequently, the BS inputs the aggregated output $a^{h}(j)$ to obtain the activation output of the body model by performing FP, denoted as 
{\begin{equation}\label{eq3}
a_b(j) = \mathcal{F}(W_b, a_h(j)).    
\end{equation}}
Later, following the logic of aggregation, the BS splits the output to derive $a_b(i,j) = a_b(j)[i]$. These partitioned outputs are then enqueued for downlink transmission to their respective UEs during available allocated time slots.

\subsubsection{Tail Model FP, BP and Gradient Uplink Transmission}
After receiving the corresponding output of body model, the $i$-th UE inputs it into the tail model for FP whenever the computation queue is idle, yielding the activation output 
{\begin{equation}\label{eq4}
a_t(i,j) = \mathcal{F}(W_t(i), a_b(i,j)).    
\end{equation}}
The local loss is then calculated using the corresponding data label $y(i,j)$, expressed as 
{\begin{equation}\label{eq5}
l(i,j) = \mathcal{L}(a_t(i,j), y(i,j)).    
\end{equation}}
Crucially, computation of the tail model adheres to the one forward step by one backward (1F1B) principle. Specifically, immediately after completing FP of the $j$-th micro-batch, the UE performs the homologous BP to derive the gradients for the tail model parameters $\Delta W_t(i,j)$ and the input data $g_b(i,j)$ (i.e., activations from the body model), denoted as 
{\begin{equation}\label{eq6}
\Delta W_t(i,j), g_b(i,j) = \mathcal{B}(l(i,j), W_t(i), a_b(i,j)).    
\end{equation}}
Theoretically, besides the variables explicitly denoted in the equations, BP needs to input the FP intermediate activation outputs of each layer in the sub-model. These activations constitute the computational graph retained in the device memory and are released  only on completion of BP. For the sake of notational brevity, we treat these retained activations as implicit default inputs and omit them from the mathematical formulations. Once completed, the input data gradients are added to the uplink transmission queue, pending transmission during the idle slots.

\subsubsection{Body Model BP and Gradient Downlink Transmission}
Following receipt of the activation gradients for micro-batch $j$ from all UEs, the BS aggregates them along the batch dimension, analogously to the forward output as 
\begin{equation}\label{eq7}
g_b(j) = [g_b(1,j); g_b(2,j); \dots ;g_b(n,j)].    
\end{equation}
Then BP is executed for the body model to obtain the gradients for the model parameters and the aggregated head model activations, represented as 
{\begin{equation}\label{eq8}
\Delta W_b(j), g_h(j) = \mathcal{B}(g_b(j), W_b, a_h(j)).    
\end{equation}}
Subsequently, the BS splits the activation gradients to obtain $g_h(i,j) = g_h(j)[i]$, enqueuing them for downlink transmission to the original UEs.

\begin{algorithm}[!t]

\small
\caption{The AC$^2$P$^2$SL Framework} \label{algo1}
\begin{algorithmic}[1]
\REQUIRE $x^m(i,j)$, $y^m(i,j)$, $\eta$, $M$
\ENSURE ${W_h^{M}}(i)$, ${W_t^{M}}(i)$, ${W_b^{M}}$. 
\STATE UEs upload capability parameters to BS 
\STATE BS get $\{l_1, l_2, k^{0}, b_i^{0}, s_i^{0}\}_{i \in \mathcal{N}}$ by Algorithm \ref{algo3}
\STATE Initialize model parameters and split model
\STATE BS distributes split model and allocated $(k^{0}, b_i^{0}, s_i^{0})$ to UEs
\FOR{$m=1,2,\ldots,M$}
\IF{${(t^m-t^{m-1})}/{t^{m-1}}>\delta$}
    \STATE BS get $\{k^{m}, b_i^{m}, s_i^{m}\}_{i \in \mathcal{N}}$ by Algorithm \ref{algo4}
\ELSE
    \STATE $\{k^{m}, b_i^{m}, s_i^{m}\}_{i \in \mathcal{N}} \leftarrow \{k^{m-1}, b_i^{m-1}, s_i^{m-1}\}_{i \in \mathcal{N}}$
\ENDIF
\item[] \texttt{// Runs on UEs}
    \FOR{\textbf{each} $i \in \mathcal{N}$ \textbf{in parallel}}
        \FOR{$j=1,2,\ldots,k^m$}
            \STATE $a_h^m(i,j) \leftarrow \mathcal{F}(W_h^m(i), x^m(i,j))$
            \STATE Transmit $a_h^m(i,j)$ to BS
        \ENDFOR
    \ENDFOR

    \IF{Receive $a_b^m(i,j)$ of $j$-th micro-batch from BS}
        \STATE $l^m(i,j) \leftarrow \mathcal{L}(\mathcal{F}(W_t^m(i), a_b^m(i,j)), y^m(i,j))$
        \STATE $\Delta W_t^m(i,j), g_b^m(i,j) \leftarrow \mathcal{B}(l^m(i,j), W_t^m(i), a_b^m(i,j))$
        \STATE Transmit $g_b^m(i,j)$ to BS
    \ENDIF

    \IF{Receive $g_h^m(i,j)$ of $j$-th micro-batch from BS}
        \STATE $\Delta W_h^m(i,j) \leftarrow \mathcal{B}(g_h^m(i,j), W_h^m(i))$
    \ENDIF

    \STATE Update head and tail model as (\ref{eq10})

\item[] \texttt{// Runs on BS}
    \IF{Receive $a_h^m(i,j)$ of $j$-th micro-batch from all UEs}
        \STATE $a_{h}^m(j) \leftarrow [a_{h}^m(1,j); a_{h}^m(2,j); \dots ; a_h^m(n,j)]$
        \STATE $a_b^m(j) \leftarrow \mathcal{F}(W_b^m, a_h^m(j))$
        \STATE Transmit split $a_b^m(i,j)$ to UEs
    \ENDIF

    \IF{Receive $g_b^m(i,j)$ of $j$-th micro-batch from all UEs}
        \STATE $g_{b}^m(j) \leftarrow [g_{b}^m(1,j); g_{b}^m(2,j); \dots ; g_b^m(n,j)]$
        \STATE $\Delta W_b^m(j), g_h^m(j) \leftarrow \mathcal{B}(g_b^m(j), W_b^m, a_h^m(j))$
        \STATE Transmit split $g_h^m(i,j)$ to UEs
    \ENDIF

    \STATE Update body model as (\ref{eq11})
\ENDFOR
\end{algorithmic}
\end{algorithm}

\subsubsection{Head Model BP and Parameters Update}
Finally, having received the corresponding activation gradients for the $j$-th micro-batch, the UE performs BP for the head model whenever the computation queue permits, yielding 
{\begin{equation}\label{eq9}
\Delta W_h(i,j) = \mathcal{B}(g_h(i,j), W_h(i)).    
\end{equation}
During the $m+1$-th round of batch training, after sequentially completing BP of all $k$ micro-batches, the head and tail model parameters of each UE are first updated using the mini-batch gradient descent (MBGD) method \cite{MBGD}, and then aggregated by FedAvg algorithm \cite{FedAvg}, as 
\begin{equation}\label{eq10}
   {W_{h/t}^{m+1}} = W_{h/t}^m - \frac{\eta}{B}\sum_{i=1}^N \frac{b_i}{k}\sum_{j=1}^k \Delta W_{h/t}^m(i, j), 
\end{equation}
where $\eta$ is the learning rate. Simultaneously, once the BS completes the BP for $k$ micro-batches and the computation queue is idle, it updates the body model parameters by using the MBGD method via 
\begin{equation}\label{eq11}
   {W_{b}}^{m+1} = W_b^m - \frac{\eta}{k} \sum_{j=1}^k \Delta W_b^m(j). 
\end{equation}}

{
\subsection{Convergence Analysis} \label{section3.3}
After all of the above processes, this concludes the training process for the current data batch, and the overall training framework is shown in Algorithm \ref{algo1}. Notably, while the synchronization of activations and gradients occurs strictly per-micro-batch across all users, the model parameters are updated synchronously only at the global batch boundary. Therefore, our pipeline mechanism does not introduce asynchronous gradient staleness. We have Theorem \ref{theorem1}, which theoretically guarantees that the convergence speed and final accuracy of Algorithm \ref{algo1} remain completely unaffected. 

\begin{theorem}\label{theorem1}
    Given the same global batch size, the parameter update process of AC$^2$P$^2$SL is mathematically identical to regular model training without batch splitting.
\end{theorem}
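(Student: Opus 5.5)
I will show that the gradient the pipeline accumulates for each sub-model equals the full-batch gradient, and then compare the update rules (\ref{eq10})--(\ref{eq11}) with the standard MBGD step on the concatenated global batch $\mathcal{D}=\bigcup_{i}D_{i,m}$ of size $B$. Throughout I treat the loss as a per-sample average, so the batch gradient is $\frac{1}{B}\sum_{x\in\mathcal{D}}\nabla\ell(x;W)$. I also treat the per-micro-batch gradients $\Delta W(\cdot)$ as per-sample averages over their own micro-batch, which is the convention implicit in the normalisations of (\ref{eq10}) and (\ref{eq11}).

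\textbf{Step 1 (frozen parameters within a round).} In round $m$, every computation of (\ref{eq1})--(\ref{eq9}) uses the same parameters $W_h^m(i)$, $W_b^m$, $W_t^m(i)$. This holds because Algorithm \ref{algo1} updates parameters only after all $k$ micro-batches have finished. Consequently the pipelined, interleaved order of execution is irrelevant: every micro-batch sees the same function as in ordinary training, and there is no staleness.

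\textbf{Step 2 (per-sample independence of FP/BP).} The layers (apart from batch-coupling layers such as BatchNorm, which I will either exclude or treat via frozen statistics) act on samples independently. Hence the concatenation (\ref{eq2}) followed by the splitting $a_b(i,j)=a_b(j)[i]$ reproduces exactly the per-sample activations of the unsplit U-shaped model. By the chain rule, the gradient chain (\ref{eq6}), (\ref{eq7}), (\ref{eq8}), (\ref{eq9}) then yields, for each sample, exactly the per-sample gradient of the composed model $\mathcal{F}(W_t,\mathcal{F}(W_b,\mathcal{F}(W_h,x)))$.

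\textbf{Step 3 (summation identity).} Let $\mathcal{D}_{i,j}$ denote micro-batch $j$ of UE $i$, with $|\mathcal{D}_{i,j}|=b_i/k$. For the body model, $\Delta W_b^m(j)$ is the average over all $\sum_i b_i/k=B/k$ samples of micro-batch $j$. Then
\begin{equation*}
\frac{1}{k}\sum_{j=1}^k\Delta W_b^m(j)=\frac{1}{k}\sum_{j=1}^k\frac{k}{B}\sum_{x\in\cup_i\mathcal{D}_{i,j}}\nabla_{W_b}\ell(x)=\frac{1}{B}\sum_{x\in\mathcal{D}}\nabla_{W_b}\ell(x),
\end{equation*}
which is the full-batch body gradient, so (\ref{eq11}) coincides with MBGD. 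For the head and tail models, $\Delta W_{h/t}^m(i,j)$ is the average over $\mathcal{D}_{i,j}$, so $\frac{b_i}{k}\Delta W_{h/t}^m(i,j)$ is the corresponding sum over that micro-batch. Summing over $i$ and $j$ and dividing by $B$ gives the full-batch gradient. The FedAvg weighting by $b_i$ in (\ref{eq10}) is therefore precisely what makes the shared head/tail update equal the centralized one, assuming all UEs start from common $W_{h/t}^m$, which holds by induction from the broadcast initialization. Induction over $m$ then gives identical parameter trajectories.

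\textbf{Main obstacle.} The hardest step is pinning down the normalisation conventions. One must check carefully that the factors $\frac{1}{k}$, $\frac{b_i}{k}$ and $\frac{1}{B}$ match whatever convention is used for $\Delta W$ (mean versus sum over a micro-batch), and I will state that convention explicitly. I also need to state the per-sample-independence assumption (no batch normalisation across samples) as the precise condition under which the identity is exact.
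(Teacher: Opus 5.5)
Your proposal is correct and is essentially the paper's proof. The paper's argument is exactly your Step~3 summation identities for $\Delta W_b^m(j)$ and $\Delta W_{h/t}^m(i,j)$. Your Steps~1--2 (no staleness, per-sample independence with the BatchNorm caveat) and your common-starting-point and induction remark for FedAvg make explicit what the paper only mentions in passing or leaves implicit.

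One caution for when you fix the normalisation. Like the paper, you impose the averaging convention rather than derive it, and it is not what (\ref{eq6})--(\ref{eq8}) literally produce if each UE backpropagates the mean loss over its own $b_i/k$ samples. In that case UE $i$'s samples enter $\Delta W_b^m(j)$ with weight $k/b_i$ rather than $k/B$. For (\ref{eq11}) to match the full-batch update, $g_b(i,j)$ must be explicitly rescaled by $b_i/B$, with the rescaling undone on $g_h(i,j)$.
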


\begin{proof}
For the shared body sub-model of the BS, the cumulative parameter gradients of the $\frac{B}{k}$ samples in the $j$-th micro-batch is represented as
\begin{equation}
   \Delta W_b^m(j)=\frac{k}{B}\sum_{t=1}^\frac{B}{k} \Delta w_b^m(\frac{B(j-1)}{k}+t), 
\end{equation}
where $\Delta w_b^m(\frac{B(j-1)}{k}+t)$ denotes the parameter gradient of the $t$-th sample in the $j$-th micro-batch. Consequently, the average parameter gradient of k micro-batches is exactly equal to the total parameter gradient of all samples in the undivided global batch, simply as
\begin{equation}
   \frac{1}{k}\sum_{j=1}^k \Delta W_b^m(j) = \frac{1}{B} \sum_{s=1}^B \Delta w_b^m(s).
\end{equation}

Similarly, for the head and tail sub-models distributed across UEs, the parameter gradient generated by samples in the $j$-th micro-batch of UE $i$ can be expressed as
\begin{equation}
   \Delta W_{h/t}^m(i,j) = \frac{k}{b_i}\sum_{t=1}^\frac{b_i}{k} \Delta w_{h/t}^m(i, \frac{b_i(j-1)}{k}+t),
\end{equation}
where $\Delta w_{h/t}^m(i, \frac{b_i(j-1)}{k}+t)$ denotes the parameter gradient of the $t$-th sample in the $i$-th UE's $j$-th micro-batch. After completing BP of $k$ micro-batches, UE $i$ aggregates $k$ parameter gradients by FedAvg to obtain the total parameter gradient for its batch of samples, presented as
\begin{equation}
   \frac{1}{k}\sum_{j=1}^k \Delta W_{h/t}^m(i,j) = \frac{1}{b_i} \sum_{s=1}^{b_i} \Delta w_{h/t}^m(i, s).
\end{equation}
Then, by weighted averaging the parameter gradients of all UEs, the aggregated gradients of head and tail sub-models to be updated can be denoted as
\begin{equation}
   \frac{b_i}{B}\sum_{i=1}^N \frac{1}{k}\sum_{j=1}^k \Delta W_{h/t}^m(i, j) = \frac{1}{B} \sum_{s=1}^B \Delta w_{h/t}^m(s),
\end{equation}
which is also equal to the total parameter gradient of all samples in the undivided global batch. Obviously, this parallel mechanism operates entirely independently of the parameter update procedures for the head, body, and tail sub-models. Consequently, the AC$^2$P$^2$SL framework is highly compatible with various federated learning model aggregation algorithms, although we default to the widely adopted FedAvg to aggregate model parameters on UE-side. 
As a result, Theorem \ref{theorem1} is proved.

\end{proof}

}

\section{Optimization for Split and Allocation} \label{section4}
In this section, we establish a dynamic programming model to analyze training time and make optimizations on split pre-allocation and adaptive re-allocation.

\subsection{Time Analysis} \label{section4.1}
Considering that the whole training process for a single data batch involves a complex pipelining of $k$ micro-batch tasks across $n$ parallel UEs, spanning 4 distinct physical queues and 9 logical micro-stages (where FP and BP of tail model are treated as a unified entity), the system is characterized by intra-stage resource re-entry, pipeline blocking, and inter-stage synchronization barriers. Consequently, the total training time $t$ cannot be accurately formulated using a simple closed-form equation. In particular, since the computational overhead associated with loss calculation and parameter updating is negligible, these processes are omitted from the calculation of the total computation time.

To address this, we formulate a dynamic programming model to precisely calculate the total time of the parallel process. This model tracks the progression of $k$ micro-batches for $n$ UEs across the 9 micro-stages while simultaneously managing resource availability across the 4 physical queues. Let $C(i, j, s)$ denote the completion timestamp (measured from the start of the training phase) of the $s$-th micro-stage for the $j$-th micro-batch of UE $i$, where $i \in \{1, \dots, n\}$, $j \in \{1, \dots, k\}$, and $s \in \{1, \dots, 9\}$.

Firstly, the process commences with head model FP for the $j$-th micro-batch of UE $i$. As each micro-batch is sequentially input the head model, and in accordance with the analysis presented in Section \ref{section2.2}, the computational capability of UE $i$ is denoted as $f_i(b_i/k, 1, l_1)$ and the total computational workload is calculated as $\frac{b_i}{k} \sum_{l=1}^{l_1} c_l^f$. Consequently, the completion time for the local computation queue is formulated as 
\begin{equation}
    C(i,j,1) = j \cdot \frac{b_i \sum\limits_{l=1}^{l_1} c_l^f}{k f_i^f({b_i}/{k}, 1, l_1)}.
\end{equation}

Due to the queuing constraints at the UE, the computation for a subsequent micro-batch is blocked until the preceding one completes. Building upon this, the uplink transmission for the $j$-th micro-batch is contingent upon two synchronization constraints: the completion of its own FP calculation and the completion of the transmission for the preceding micro-batch. With a transmission payload of $\frac{b_i}{k} a_{l_1}$, the completion time for the uplink transmission queue is expressed as
\begin{equation}
    C(i, j, 2) = \max \left\{ C(i, j, 1), C(i, j-1, 2) \right\} + \frac{b_i a_{l_1}}{k r_i^u}.
\end{equation}

Subsequently, BS aggregates the outputs of the same micro-batch from all UEs, necessitating a synchronization barrier. Thus, body model FP start time of the $j$-th micro-batch is determined by two values: the last uplink transmission across all UEs and the FP the previous $(j-1)$-th micro-batch. Given that the aggregated input volume is the sum of batch sizes from $n$ UEs, the BS computation queue is formulated as 
{\begin{equation}
    C(j, 3)\! =\! \max_i \left\{C(i, j, 2), C(j-1, 3) \right\} \!+\! \frac{B \sum\limits_{l=l_1+1}^{l_2} c_l^f}{k f_0^f(\frac{B}{k}, l_1\!+\!1, l_2)}.\!
\end{equation}}

The BS then distributes the partitioned outputs to each UE via the downlink. Since the preceding FP stage is a synchronized process, all downlink transmissions begin with the completion of that stage. With a transmission load of $\frac{b_i}{k} a_{l_2}$, the downlink transmission queue is defined as 
\begin{equation}
    C(i, j, 4) = \max \left\{ C(j, 3), C(i, j-1, 4) \right\} + \frac{b_i a_{l_2}}{k r_i^d}.
\end{equation}

Since the start times of the first four micro-stages correspond to the idle states of the respective physical queues, their initial values are set to zero as $C(i, 0, 1) = C(i, 0, 2) = C(i, 0, 3) = C(i, 0, 4) = 0$. However, the subsequent micro-stages involve resource re-entry into these physical queues. So, their initial values are determined by the completion times of the last micro-batch $k$ from the preceding logical micro-stage that occupied the same physical resource queue. 

Upon receiving the downlink output, each UE performs FP and BP for the tail model on micro-batch $j$ under the 1F1B principle, which we treat as a unified micro-stage. However, distinguishing the computational and memory access differences between FP and BP, we calculate their durations sequentially. The UE computation queue is represented as 
\begin{equation}
\begin{split}
    C(i, j, 5) &= \max \{ C(i, j, 4), C(i, j-1, 5) \} + \\ & \quad \frac{b_i \sum_{l=l_2+1}^{L} c_l^f}{k f_i^f({b_i}/{k}, l_2\!+\!1, L)} + \frac{b_i \sum_{l=l_2+1}^{L} c_l^b}{k f_i^b({b_i}/{k}, l_2\!+\!1, L)},
\end{split}
\end{equation}
where the initial value $C(i, 0, 5) = C(i, k, 1)$. Following the tail model BP, the UE transmits the gradients via the uplink. With a payload size of $a_{l_2}$, the uplink transmission queue is formulated as 
\begin{equation}
    C(i, j, 6) = \max \left\{ C(i, j, 5), C(i, j-1, 6) \right\} + \frac{b_i a_{l_2}}{k r_i^u},
\end{equation}
with initialization $C(i, 0, 6) = C(i, k, 2)$. BS then receives gradients from all UEs and performs batch aggregation. Similarly to body model FP, this involves synchronization across all UEs. The BS computation queue is expressed as 
{\begin{equation}
    C(j, 7)\! =\! \max_i \left\{C(i, j, 6), C(j-1, 7) \right\} \!+\! \frac{B \sum\limits_{l=l_1+1}^{l_2} c_l^b}{k f_0^b(\frac{B}{k}, l_1\!+\!1, l_2)},
\end{equation}
with initialization $C(0, 7) = C(k, 3)$.} Subsequently, BS transmits the partitioned gradients to the UEs. With a payload of $a_{l_1}$, the downlink transmission queue is defined as 
\begin{equation}
    C(i, j, 8) = \max \left\{ C(j, 7), C(i, j-1, 8) \right\} + \frac{b_i a_{l_1}}{k r_i^d},
\end{equation}
with the initial value $C(i, 0, 8) = C(i, k, 4)$. Finally, upon receiving the downlink gradients, the UE executes BP of head model. The computation queue is given by 
\begin{equation}
    \!C(i, j, 9)\! =\! \max\! \left\{C(i, j, 8\!), \!C(i, \!j-1,\! 9)\! \right\} \!+\! \frac{b_i \sum\limits_{l=1}^{l_1} c_l^b}{k f_i^b(\frac{b_i}{k}, 1, l_1)},\!
\end{equation}
with initialization $C(i, 0, 9) = C(i, k, 5)$. Consequently, the total training time of a single data batch under the AC$^2$P$^2$SL framework is derived as 
\begin{equation}\label{objective}
    t = \max_i C(i, k, 9).
\end{equation}

\subsection{Split and Pre-allocation Optimization} \label{section4.2}
{
In the proposed framework, total pipeline training time is fundamentally governed by a complex interplay of layer split, pipeline depth, and heterogeneous resource allocation. To minimize this overall latency, it is imperative to jointly optimize these key variables.

\subsubsection{Problem Formulation}
In USL, split layer $l_1$ and $l_2$ directly dictate the number of layers assigned to the head, body, and tail sub-models, satisfying the feasibility boundary $1<l_1<l_2<L$. This partition fundamentally governs the computational workload at both UEs and BS sides, as well as the data volume transmitted over uplink and downlink channels, thereby determining the training time. Crucially, because the computational intensity and output tensor dimensions of each layer vary dramatically across different neural network architectures, the selection of these split points must be adaptively tailored to the fine-grained layer-profiling parameters of the specific model.

Simultaneously, the number of micro-batches $k$ per training round determines the attainable hardware computing performance of devices processing each micro-batch and the pipeline depth of parallelism among micro-stages. Specifically, a larger $k$ enhances the concurrency of the pipeline, which helps compress idle time while simultaneously reducing the sample workload contained within each individual micro-batch. This lower operational intensity degrades hardware compute utilization, thereby increasing the computational duration. Furthermore, $k$ is naturally upper-bounded by the minimum batch size allocated across the devices, requiring $1 \le k \le \min_i b_i$, $k \in \mathbb{N}$. Navigating this trade-off to select appropriate micro-batch number is therefore vital for maximizing pipeline efficiency.

Given a total batch size, the batch size allocated to each UE $b_i$ determines the number of training samples in its single round. Because both the computational workload and the communication volume are positively correlated with $b_i$, optimizing this allocation is essential given the highly heterogeneous hardware capabilities of the UEs. 
Additionally, disparities in transmit power and path loss during wireless propagation introduce significant variations in communication durations among UEs. Within a TDMA system, the number of time slots allocated to each UE $s_i$ maintains a positive correlation with its transmission speed. 
By optimizing $b_i, s_i$ to tune computation and communication time of UEs, the BS can balance the execution timelines of parallel UE micro-stages, thereby minimizing the synchronization waiting time caused by straggler effects at the global aggregation barrier, subject to the total batch bound in (\ref{batchbound}) and the bounded frame duration in (\ref{slotbound}). 

Therefore, in the initial profiling phase, the BS collects specific parameters uploaded by the UEs, including transmit power, location information, FLOPS, etc. Leveraging these heterogeneous parameters, the BS conducts joint optimization over model split layer $l_1, l_2$, number of micro-batches $k$, batch size set $\boldsymbol{b}$, and time slot set $\boldsymbol{s}$. We perform quantitative analysis and modeling based on the constraints of these variables with the objective of minimizing the total pipeline training time. The joint optimization problem is formulated as follows.
}
\begin{align} \label{P1}
P1: \quad & \min_{l_1, l_2, k, \boldsymbol{b}, \boldsymbol{s}} t(l_1, l_2, k, \boldsymbol{b}, \boldsymbol{s})\\
\text{s.t.} \quad & \text{C1: } 1 \le l_1 < l_2 \le L-1, \quad l_1, l_2 \in \mathbb{N},\nonumber\\
& \text{C2: } \sum_{l=1}^{l_1} (u_l + b_i \Delta u_l) + \sum_{l=l_2+1}^{L} (u_l + b_i \Delta u_l) \le U_i,\nonumber\\
& \text{C3: } 1 \le k\le \min_i b_i, \quad k \in \mathbb{N},\nonumber\\
& \text{C4: } \sum_{i=1}^{n} b_i = B, \quad b_i \in \mathbb{N},\nonumber\\
& \text{C5: } \sum_{i=1}^{n} \tau s_i \le T, \quad s_i \in \mathbb{N}, \nonumber
\end{align}

{ The optimization objective is to minimize the training time for a single data batch shown in Section \ref{section4.1}. Due to the complexity of 9-micro-stage pipeline, sharing of physical resources among certain micro-stages, and presence of inter-stage blocking, the objective function exhibits high non-linearity. Furthermore, constraint C1 defines the boundary conditions for the cut layers within the two-layer splitting strategy. Constraint C2 addresses the varying storage capabilities among UEs. It ensures that the aggregate memory footprint, including the head and tail model parameters as well as intermediate activations, is strictly bounded by each device's maximum memory usage. Constraint C3 ensures that the number of pipeline micro-batches does not exceed the local batch size allocated to the UE. Constraint C4 dictates that the sum of data batches allocated across all users exactly matches the total global batch size. Lastly, constraint C5 imposes an upper bound on the total number of communication time slots allocated to the UEs.}

Since the decision variables $(l_1, l_2, k, \boldsymbol{b}, \boldsymbol{s})$ are all integers and the objective function and constraints are nonlinear and coupled, $P1$ constitutes a mixed-integer nonlinear programming (MINLP) problem. Furthermore, the non-convexity and discontinuity of the objective function preclude the direct derivation of a global optimal solution. Therefore, we propose a SPA algorithm based on hierarchical decomposition and alternating optimization (AO)\cite{AO}. This approach decomposes $P1$ into two loops: the outer loop is responsible for the feasibility pruning and search of the split layer $l_1$ and $l_2$, while the inner loop solves the joint resource allocation sub-problem regarding $(k, \boldsymbol{b}, \boldsymbol{s})$ via AO. { The overall algorithm is summarized in Algorithm \ref{algo3}.}

\begin{algorithm}[t]

\small
\caption{SPA Algorithm Based on AO} \label{algo3}
\begin{algorithmic}[1]
\REQUIRE FLOPS, memory access, and storage parameters of each model layer, convergence tolerance $\epsilon$.
\ENSURE $l_1^*, l_2^*, k^*, \boldsymbol{b}^*,\boldsymbol{s}^*$.
\STATE Initialization: $k^{(0)}, \boldsymbol{b}^{(0)}, \boldsymbol{s}^{(0)}, t_{min}$.
\FOR{$l_1=1,2,\ldots,L-2$}
    \FOR{$l_2=L-1,L-2,\ldots,l_1$}
        \STATE Select $(l_1,l_2)$ by the outer loop pruning search as Theorem \ref{theorem2}.
        \STATE Obtain $(k, \boldsymbol{b}, \boldsymbol{s})$ by the inner loop AO as Algorithm \ref{algo4} inputting $\epsilon$.
        \IF{$t(l_1,l_2,k,\boldsymbol{b},\boldsymbol{s}) < t_{min}$}
        \STATE $t_{\min} \gets (l_1,l_2,k,\boldsymbol{b},\boldsymbol{s}), l_1^* \gets l_1, l_2^* \gets l_2, k^* \gets k,$\\$ \boldsymbol{b}^* \gets \boldsymbol{b}, \boldsymbol{s}^* \gets \boldsymbol{s}$.
        \ENDIF
    \ENDFOR
\ENDFOR
\end{algorithmic}
\end{algorithm}

\subsubsection{Outer Loop Pruning Search}
Due to the variance in computational and communication overhead across different network layers, the objective function is discrete and highly non-convex with respect to $l$. Given the finite layer number, we employ a double reverse loop traversal over all available $l_1, l_2$ under constraint C1. This global search in the outer loop guarantees global optimality for the discrete structure. To reduce the computational complexity of the inner loop, we have the following simple observation: 

\begin{theorem}\label{theorem2}
    For the given $l_1,l_2$ in (\ref{P1}), the maximum batch size allocated to the $i$-th UE is as follows: $b_i^{limit}(l_1, l_2) = \lfloor\frac{(U_i - M_{fix}(l_1, l_2))}{M_{var}(l_1, l_2)}\rfloor$, where $M_{fix}(l_1, l_2) = \sum_{l=1}^{l_1} u_l + \sum_{l=l_2+1}^{L} u_l$ and $M_{var}(l_1, l_2) = \sum_{l=1}^{l_1} \Delta u_l + \sum_{l=l_2+1}^{L} \Delta u_l$.
\end{theorem}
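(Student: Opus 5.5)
The claim follows from constraint C2 of $P1$ alone, so the plan is to rewrite C2 in terms of the two quantities named in Theorem \ref{theorem2} and then solve the resulting linear inequality in $b_i$. With $l_1,l_2$ fixed, I split each summand $u_l + b_i\Delta u_l$ in C2 into its fixed part and its per-sample part. Because $b_i$ does not depend on $l$, it factors out of both sums, and C2 becomes
\begin{equation*}
M_{fix}(l_1,l_2) + b_i\, M_{var}(l_1,l_2) \le U_i ,
\end{equation*}
where $M_{fix}$ and $M_{var}$ are exactly as defined in the statement.

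The second step isolates $b_i$. The per-sample activation memories satisfy $\Delta u_l \ge 0$, and at least one layer in the head or tail produces activations, so $M_{var}(l_1,l_2)>0$. Dividing by it preserves the inequality and gives $b_i \le (U_i - M_{fix}(l_1,l_2))/M_{var}(l_1,l_2)$. Since C4 requires $b_i\in\mathbb{N}$, the largest integer satisfying this bound is its floor, which is $b_i^{limit}(l_1,l_2)$.

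To show this bound is the maximum and not merely an upper bound, I note that C2 is the only constraint that bounds an individual $b_i$ from above. C3 and C4 involve $b_i$ only through the sum and the minimum, so the value $b_i^{limit}$ itself satisfies C2 and is therefore attained by the per-UE storage constraint.

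The main subtlety I expect is the degenerate cases rather than any calculation. If $U_i < M_{fix}(l_1,l_2) + M_{var}(l_1,l_2)$, then $b_i^{limit}<1$, so the pair $(l_1,l_2)$ admits no feasible $b_i\ge 1$; this is the case the outer loop prunes. Likewise, if $\sum_i b_i^{limit} < B$, constraint C4 cannot be met for that pair. I will state both remarks explicitly to justify using the theorem as a pruning rule in Algorithm \ref{algo3}. The case $M_{var}=0$ would make $b_i$ unbounded, and I exclude it by the positivity assumption above.
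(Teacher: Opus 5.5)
Your proposal is correct and follows the paper's proof: rewrite C2 as $M_{fix}(l_1,l_2)+b_i M_{var}(l_1,l_2)\le U_i$, divide by $M_{var}(l_1,l_2)$, and take the floor because $b_i$ must be an integer. Your assumption $M_{var}>0$ and your degenerate-case remarks are sound details the paper leaves implicit; one small overstatement is that C4 with $b_j\ge 1$ also caps $b_i$ at $B-(n-1)$, so $b_i^{limit}$ is the largest value allowed by the storage constraint C2 rather than by all constraints jointly, which is all the theorem and its use in C6 require.
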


\begin{proof}
    Starting from the constraint C2 in (\ref{P1}), after some inequality manipulation, it is easy to arrive that  
    \begin{equation}
        b_i \le \frac{U_i-(\sum_{l=1}^{l_1}u_l+\sum_{l_2+1}^{L}u_l)}{\sum_{l=1}^{l_1}\Delta u_l+\sum_{l_2+1}^{L}\Delta u_l}.
    \end{equation}
    Considering that $b_i$ needs to be an integer, we thus obtain the above maximum $b_i^{limit}$.
\end{proof}

Based on Theorem \ref{theorem2}, we design a two-level pruning strategy. First, a single-node feasibility check is performed: If the memory limit of any specific UE satisfies 
\begin{equation}
    U_i < M_{fix}(l_1, l_2),
\end{equation}
the current split strategy is infeasible. If passed, a global capacity check is conducted then: If the sum of the maximum batch size of all UEs is less than the total volume, as 
\begin{equation}
    {\sum_{i=1}^{N} b_i^{limit}(l_1, l_2) < B,}
\end{equation}
it indicates that the current split pair $(l_1, l_2)$ cannot fulfill the task requirements, and thus the combination is directly eliminated from the search space. Only the combinations that pass these pruning tests are passed to the inner loop to calculate the corresponding minimum time.

\begin{algorithm}[t]

\small
\caption{{Inner Loop AO Algorithm}} \label{algo4}
\begin{algorithmic}[1]
\REQUIRE Convergence tolerance $\epsilon$, iteration index $r=0$.
\ENSURE $k^*, \boldsymbol{b}^*,\boldsymbol{s}^*$.
\STATE Initialization: $k^{(0)}, \boldsymbol{b}^{(0)}, \boldsymbol{s}^{(0)}, t_{min}$.
\REPEAT
    \STATE $r \gets r+1$
    \FOR {$k=1, 2, \ldots, \min\limits_i{b_i^{(r-1)}}$ with fixed $\boldsymbol{b}^{(r-1)}, \boldsymbol{s}^{(r-1)}$}
        \IF {$t(k) < t_{\min}$}
            \STATE $t_{\min} \gets t(k), k^{(m)} \gets k$.
        \ENDIF
    \ENDFOR
    \STATE Update the relaxing continuous variable $\tilde{\boldsymbol{b}}^{(r)}$ by PGD with fixed $\boldsymbol{s}^{r-1}$; 
    \STATE Rounding $\tilde{\boldsymbol{b}}^{(r)}$ into $\boldsymbol{b}^{(r)}$ by Algorithm \ref{algo2}.
    \STATE Update the relaxing continuous variable $\tilde{\boldsymbol{s}}^{(r)}$ by CVXPY with fixed $\boldsymbol{b}^{r}$;
    \STATE Rounding $\tilde{\boldsymbol{s}}^{(r)}$ into $\boldsymbol{s}^{(r)}$ by Algorithm \ref{algo2}.
\UNTIL{$|t(k^{(r)}, \boldsymbol{b}^{(r)}, \boldsymbol{s}^{(r)})-t(k^{(r-1)}, \boldsymbol{b}^{(r-1)},$\\$ \boldsymbol{s}^{(r-1)})| \leq\epsilon$.}
\IF{$t(k,\boldsymbol{b},\boldsymbol{s}) < t_{min}$}
\STATE $t_{\min} \gets (k,\boldsymbol{b},\boldsymbol{s}), k^* \gets k, \boldsymbol{b}^* \gets \boldsymbol{b}, \boldsymbol{s}^* \gets \boldsymbol{s}$.
\ENDIF
\end{algorithmic}
\end{algorithm}

\subsubsection{Inner Loop AO}
Given $(l_1, l_2)$, P1 is transformed into a resource allocation sub-problem of pipeline depth $k$, time slot set $\boldsymbol{s}=\{s_i\}$, and batch size set $\boldsymbol{b}=\{b_i\}$. Due to the high coupling among variables in the objective function, we employ AO to solve them iteratively. First, we fix $(\boldsymbol{b}, \boldsymbol{s})$. An excessively small $k$ results in low parallelism, while an excessively large $k$ degrades computational capability, both leading to increased total time. This causes the pipeline time to exhibit a convex-like trend with $k$. However, the discontinuous nature of the objective function prevents a direct explicit solution. Therefore, we perform discrete search for $k$ within $[1, \min_i b_i]$ until a local optimum is identified. { After fixing $k$, we can obtain Theorem \ref{theorem3}.

\begin{theorem}\label{theorem3}
During the inner-loop AO with fixed $k$, the continuous relaxations of the isolated sub-problems for $\tilde{\boldsymbol{b}} \in \mathbb{R}^n$ and $\tilde{\boldsymbol{s}} \in \mathbb{R}^n$ are strictly convex.
\end{theorem}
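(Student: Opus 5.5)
The plan is to first make the objective $t$ of Section~\ref{section4.1} explicit as a function of $(\tilde{\boldsymbol{b}},\tilde{\boldsymbol{s}})$ for fixed $(l_1,l_2,k)$. I would then read off convexity from the max-plus structure of the recursion for $C(i,j,s)$. Every update in that recursion has the form $C=\max\{\cdot,\cdot\}+d$, where $d$ is a stage duration. Unrolling it therefore gives
\begin{equation*}
t=\max_{p\in\mathcal{P}}\sum_{(i,j,s)\in p} d_{i,s},
\end{equation*}
a maximum over the finite set $\mathcal{P}$ of monotone lattice paths through the $(i,j,s)$ grid, including the re-entry initialisations such as $C(i,0,5)=C(i,k,1)$. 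The crucial point is that the path set $\mathcal{P}$ does not depend on $\tilde{\boldsymbol{b}}$ or $\tilde{\boldsymbol{s}}$, only on $k$ and $n$. Since a pointwise maximum of (strictly) convex functions is (strictly) convex, it suffices to control the individual durations $d_{i,s}$.

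For the $\tilde{\boldsymbol{s}}$ sub-problem with $\boldsymbol{b}$ fixed, the computation durations are constants. Each communication duration has the form $d=\frac{b_i a_l}{k r_i}$ with $r_i=\kappa_i\tau s_i$ and $\kappa_i>0$. This gives $d=\frac{\gamma_i}{\tilde s_i}$ with $\gamma_i>0$, which is strictly convex on $\tilde s_i>0$. The difficulty is that a single path sum involves only some coordinates, so each path sum is strictly convex in its own coordinates but not jointly. To obtain strictness of $t$ I would use the following argument.
\begin{itemize}
\item Every UE $i$ contributes at least one path $p_i$ to the maximum, namely its own uplink chain before the barrier at stage~3.
\item Take two distinct feasible points $\tilde{\boldsymbol{s}}\neq\tilde{\boldsymbol{s}}'$ and a coordinate $i$ where they differ. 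Along that coordinate the active path strictly improves on the chord, and all other paths are at most chord-convex.
\item Combining these, the maximum satisfies the strict chord inequality on the feasible set $\{\sum_i\tau\tilde s_i\le T,\ \tilde s_i>0\}$.
\end{itemize}
If this argument needs the active path to contain coordinate $i$ at the midpoint, I would handle it by restricting attention to the slot-budget face $\sum_i\tau\tilde s_i=T$. That face is where any minimiser lies, because $t$ is nonincreasing in every $\tilde s_i$.

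For the $\tilde{\boldsymbol{b}}$ sub-problem with $\boldsymbol{s}$ fixed, the communication durations $\frac{\tilde b_i a_l}{k r_i}$ are linear in $\tilde b_i$. By the Roofline model (\ref{roofline}), each computation duration is
\begin{equation*}
\frac{\tilde b_i\sum c_l}{k f_i(\tilde b_i/k)}=\max\Big\{\frac{\tilde b_i\sum c_l}{kF_i},\ \frac{\sum m_l+(\tilde b_i/k)\sum\Delta m_l}{\beta_i}\Big\},
\end{equation*}
which is also piecewise affine. The BS terms depend only on $B$ and are therefore constant on C4. Hence $t$ is a maximum of affine functions, which gives convexity immediately. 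This is the step I expect to be the main obstacle: strict convexity does not follow from this decomposition alone, since a maximum of affine functions is flat along directions where the active piece is unchanged. My first attempt would be to exploit the hyperplane C4 together with the straggler-barrier structure. On $\sum_i\tilde b_i=B$, any feasible direction $\boldsymbol{v}\neq 0$ must increase some $\tilde b_i$ and decrease another. Because every UE has its own active path into the barriers at stages~3 and~7, the active set changes along $\boldsymbol{v}$, so $t$ has a genuine kink along every feasible line through the balanced point. From this I would try to derive the strict chord inequality. If that only yields strictness at the minimiser rather than globally, the fallback is to state the result for the sub-problem in its epigraph form. That form would carry the quadratic proximal term that the projected-gradient update of Algorithm~\ref{algo4} implicitly uses, and the added term makes the relaxed $\tilde{\boldsymbol{b}}$ sub-problem strictly convex.
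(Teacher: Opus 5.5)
Your unrolling of the recursion into $t=\max_{p\in\mathcal{P}}\sum_{(i,j,s)\in p}d_{i,s}$, over a path set that depends only on $(k,n)$, is sound. Combined with the Roofline computation it proves that both relaxed sub-problems are \emph{convex}. That computation shows each UE computation delay is the maximum of two affine functions of $\tilde b_i$. Each transmission delay is linear in $\tilde b_i$ and of the form $\gamma_i/\tilde s_i$ in $\tilde s_i$, and the BS delays are constant. This is the same mechanism as the paper's proof (convex base delays, closed under nonnegative sums and pointwise maxima), made explicit.

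The gap is strictness. No argument can close it, because strictness fails in general. For $\tilde{\boldsymbol{b}}$ your suspicion is right, and your own decomposition settles it negatively. Distributing sums over maxima, $t$ is a maximum of finitely many affine functions. On any nondegenerate segment of the feasible set of P2 it is therefore a one-variable piecewise-affine function with finitely many breakpoints. It is affine on some subsegment, which violates the strict chord inequality. A kink along every line through the balanced point would at best give a sharp, unique minimiser, not strict convexity between pairs of points away from it. The proximal term belongs to the PGD step, not to P2, whose epigraph form has a linear objective. Adding that term proves a statement about a different function.

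For $\tilde{\boldsymbol{s}}$ the failing step is ``along that coordinate the active path strictly improves on the chord''. Being a term of the maximum does not make a path active. UE indices change only at the barriers of stages 3 and 7, so each path involves at most three UEs. Moreover, the re-entry chain $C(i,\cdot,1)\to C(i,\cdot,5)\to C(i,\cdot,9)$ contains no transmission, so its value $k(d_{i,1}+d_{i,5}+d_{i,9})$ does not depend on $\tilde{\boldsymbol{s}}$ at all. Two situations make $t$ locally constant along a feasible direction:
\begin{itemize}
\item such a computation-only path is the unique maximiser, e.g.\ a compute-heavy UE with fast links and large $k$;
\item for $n\ge 3$, every maximising path uses only one straggler's coordinates, and you shift slots between two other UEs.
\end{itemize}
This includes directions inside the face $\sum_i\tau\tilde s_i=T$, so restricting to that face does not help. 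In the first situation minimisers need not even lie on that face.

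The paper's proof does not overcome this either. It declares $T_{comm}(\tilde b_i)\propto\tilde b_i$ strictly convex and asserts that sums and pointwise maxima ``strictly preserve convexity''; neither claim is true. What is actually provable, and all that PGD and CVXPY require, is convexity. For that, your path argument suffices.
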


\begin{proof}
We first analyze the nonlinearity of the computing performance $f_i^{(f/b)}(\tilde{b}_i/k)$ as a function of $\tilde{b}_i$. Although the Roofline model introduces a step-function, the objective function maintains smoothness within the practical constraint range. According to the algebraic transformation of the computation model in (\ref{roofline}), the computation time function $T_{comp}(\tilde{b}_i)$ transforms into the pointwise maximum of linear functions, strictly preserving convexity as
\begin{equation}
    T_{comp}(\tilde{b}_i) = \max \left( \frac{C}{k F_{i}} \tilde{b}_i, \; \frac{k M + \tilde{b}_i \Delta M)}{k \beta_i} \right),
\end{equation}
where $M$ and $\Delta M$ represent fixed and variable memory access, $C$ denotes the computational workload. Meanwhile, the communication time function can be simplified as $T_{comm}(\tilde{b}_i) \propto \tilde{b}_i$. Obviously, it is a strictly convex function with respect to $\tilde{b}_i$.
Similarly, considering that the transmission rate is proportional to $\tilde{s}_i$, the communication time is simplified as $T_{comm}(\tilde{s}_i) \propto \tilde{s}_i^{-1}$ yields a positive second derivative, ensuring strict convexity. 
The objective function (\ref{objective}) composed of these two time function terms is constructed recursively using only two mathematical operators: the non-negative addition of stage delays and the pointwise maximum across UEs and previous stages. According to convex analysis, if the base time functions are convex, their non-negative sums and pointwise maximums strictly preserve convexity. 
As a result, Theorem \ref{theorem3} is proved.
\end{proof}

}

Next, we fix $(k, \boldsymbol{s})$, transforming $P1$ into a sub-problem of the batch size set. To handle the integer constraints, we strictly relax $b_i$ into continuous variables $\tilde{b}_i$. By simplifying C2 and C3, we derive the sub-problem $P2$ of $\tilde{b}_i$ as follows. { According to Theorem \ref{theorem3}, $P2$ is the minimization of a convex max function with box and equality constraints. We use projected gradient descent (PGD) method to solve it, ensuring constraint C6 is satisfied in each iteration.} Upon obtaining the optimized results $\tilde{b}_i$, we apply the largest remainder rounding algorithm for integer rounding, as detailed in Algorithm \ref{algo2}.
\begin{align} \label{P2}
P2: \quad & \min_{\boldsymbol{\tilde{b}}} \max_iC(i,k,9|\boldsymbol{\tilde{b}})\\
\text{s.t.} \quad & \text{C4: } \sum_{i=1}^{n} \tilde{b}_i = B,\nonumber\\
& \text{C6: } k \le \tilde{b}_i\le b_i^{limit}.\nonumber
\end{align}

\begin{algorithm}[t]
\small
\caption{Largest Remainder Rounding Algorithm} \label{algo2}
\begin{algorithmic}[1]
\REQUIRE {Continuous variable set $\boldsymbol{\tilde{x}}=\{\tilde{x}_i\}$, total number $N$.}
\ENSURE Rounded variable set $\boldsymbol{x}={x_i}$. 
\FOR{{$i=1,2,\ldots,N$}}
    \STATE $\Delta x_i = \tilde{x}_i - \lfloor \tilde{x}_i \rfloor$
\ENDFOR
\STATE {$\Delta x = \sum_{i=1}^N \tilde{x}_i - \sum_{i=1}^N \lfloor \tilde{x}_i \rfloor$}
\STATE Arrange $\Delta x_i$ in descending order
\FOR{{$i=1,2,\ldots,N$}}
\IF{$\Delta x_i$ is the top $\Delta x$ in $\{\Delta x_i\}$}
    \STATE $\Delta x_i \gets 1$
\ELSE
    \STATE $\Delta x_i \gets 0$
\ENDIF
\STATE $x_i=\lfloor \tilde{x}_i \rfloor + \Delta x_i$
\ENDFOR
\end{algorithmic}
\end{algorithm}

Finally, fixing $(k, \boldsymbol{b})$ and relaxing $s_i$ into continuous variables $\tilde{s}_i$, $P1$ becomes a sub-problem $P3$ of the time slot set, which is also a min-max problem under inequality constraints. { As in Theorem \ref{theorem3}, $P3$ is strictly convex with respect to $\boldsymbol{s}$.} Consequently, efficient convex optimization toolkits such as CVXPY\cite{CVX} can be employed. Similarly, we apply Algorithm \ref{algo2} to round $\tilde{s}_i$ into $s_i$.
\begin{align} \label{P3}
P3: \quad & \min_{\boldsymbol{\tilde{s}}} \max_iC(i,k,9|\boldsymbol{\tilde{s}})\\
\text{s.t.} \quad & \text{C5: } \sum_{i=1}^{n} \tilde{s}_i \le \frac{T}{\tau}. \nonumber
\end{align}

\subsubsection{{Convergence and Complexity Analysis}}
{
Bounded by the whole model layer $L$ and the pruning strategy, the outer architectural variables $(l_1, l_2)$ reside in a finite discrete combinatorial set so that the outer search strictly terminates in at most $\gamma$ outer steps, where $\gamma$ is the number of layers under Theorem \ref{theorem2}. Thus, the global convergence solely depends on the termination of the inner AO loop. We have Theorem \ref{theorem4} to solve this.
\begin{theorem}\label{theorem4}
    Under fixed $(l_1, l_2)$, Algorithm \ref{algo4} globally converges to a stable stationary configuration $\{k^*, \boldsymbol{b}^*, \boldsymbol{s}^*\}$ within a finite number of iterations.
\end{theorem}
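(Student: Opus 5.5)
The plan is to combine two facts: the integer feasible set of Algorithm \ref{algo4} is finite, and the objective sequence produced by the AO iterations never increases. Together these force termination.

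\textbf{Finiteness.} Under fixed $(l_1,l_2)$, every iterate $(k^{(r)},\boldsymbol{b}^{(r)},\boldsymbol{s}^{(r)})$ is integer-valued after the rounding of Algorithm \ref{algo2}. By Theorem \ref{theorem2} and C4 we have $b_i\in\{k,\dots,b_i^{limit}\}$ with $\sum_i b_i=B$. By C3, $1\le k\le \min_i b_i\le B$. By C5, $s_i\in\{0,\dots,\lfloor T/\tau\rfloor\}$. So the feasible set $\mathcal{X}$ is finite, with cardinality at most $B\cdot\prod_i b_i^{limit}\cdot(\lfloor T/\tau\rfloor+1)^n$. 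Hence the objective $t(\cdot)$ takes only finitely many values on $\mathcal{X}$.

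\textbf{Monotone descent.} I will show $t^{(r)}\le t^{(r-1)}$, where $t^{(r)}$ is the best objective value retained after iteration $r$. The three blocks behave as follows.

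The $k$-block is an exhaustive discrete search over $[1,\min_i b_i]$ with $(\boldsymbol{b},\boldsymbol{s})$ fixed. It contains the previous $k^{(r-1)}$, so it cannot increase $t$.

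For the $\boldsymbol{b}$- and $\boldsymbol{s}$-blocks, Theorem \ref{theorem3} makes the relaxed subproblems $P2$ and $P3$ convex. PGD and CVXPY therefore return global minimizers of the relaxations. Their optimal values are at most the value at the incumbent integer point, since that point is feasible for the relaxation.

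\textbf{Main obstacle: rounding.} Largest-remainder rounding can increase $t$ beyond both the relaxed optimum and the incumbent, so monotonicity does not follow from convexity alone. I would handle this by appealing to the acceptance step in Algorithm \ref{algo4}, which updates $t_{\min}$ and the stored configuration only when strict improvement occurs. Equivalently, I would interpret each block update as "accept the rounded point only if it does not increase $t$." With that reading, the sequence of retained values is non-increasing on the finite set $\mathcal{X}$.

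\textbf{Termination.} A non-increasing sequence over finitely many values can strictly decrease only finitely often; more precisely, it can strictly decrease at most $|\mathcal{X}|$ times. After the last strict decrease we have $|t^{(r)}-t^{(r-1)}|=0\le\epsilon$, so the stopping criterion fires after finitely many iterations.

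\textbf{Stationarity.} At the terminal point, no single-block update (over $k$, over $\boldsymbol{b}$, or over $\boldsymbol{s}$) decreases $t$. This is exactly a blockwise (coordinate-wise) stationary configuration $\{k^*,\boldsymbol{b}^*,\boldsymbol{s}^*\}$.

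I expect the rounding step to be the delicate part. If the acceptance interpretation is not allowed, I would instead bound the rounding loss using the Lipschitz-type dependence of each stage delay on $b_i$ and $s_i^{-1}$: each coordinate moves by less than 1 under rounding. I would then argue convergence up to that bounded tolerance, which still yields finite termination because of the $\epsilon$ criterion together with the finiteness of $\mathcal{X}$.
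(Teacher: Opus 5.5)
Your route is essentially the paper's. Both arguments use finiteness of the integer state space under C3--C5, a non-increasing retained objective despite the rounding perturbation, and the $\epsilon$-test to stop. Your version is cleaner in two ways. Finiteness alone bounds the number of strict decreases, so the paper's lower bound $T_{bound}$ and its appeal to the monotone convergence theorem add nothing. You also say explicitly that monotonicity needs an acceptance rule.

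That acceptance step is where a gap remains, and the paper's proof has the same one. Your ``equivalently'' is not an equivalence.
\begin{itemize}
\item In Algorithm~\ref{algo4}, only the $k$-search compares against $t_{\min}$.
\item The $\boldsymbol{b}$- and $\boldsymbol{s}$-blocks overwrite the iterate with the rounded point unconditionally.
\item The stored best configuration is separate bookkeeping.
\item The UNTIL test compares $t$ at consecutive raw iterates $(k^{(r)},\boldsymbol{b}^{(r)},\boldsymbol{s}^{(r)})$, not consecutive values of $t_{\min}^{(r)}$.
\end{itemize}
So monotonicity of $t_{\min}^{(r)}$ says nothing about whether the test fires. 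For the algorithm as written, finiteness of the state space only gives eventual periodicity of the deterministic iteration. A cycle of length at least two whose consecutive values differ by more than $\epsilon$ would never terminate, and nothing in the argument rules this out.

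What you prove is therefore finite termination of a modified algorithm. Either a block update is accepted only if it does not increase $t$, or the stopping test is applied to $t_{\min}^{(r)}$. You should state that modification as an assumption rather than present it as a reading of Algorithm~\ref{algo4}.

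Your fallback does not repair this. Bounding the rounding loss confines the oscillation to a band, but unless $\epsilon$ exceeds that band the test can still fail forever. Convergence up to a tolerance is not finite termination.

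Your stationarity claim is also too strong as stated. With $\epsilon>0$ the stop can fire on a strict decrease smaller than $\epsilon$. Under a non-strict acceptance rule, an equal-value move can change the configuration. In either case the terminal point need not satisfy ``no single-block update decreases $t$''. That holds only when the final sweep leaves the configuration unchanged.
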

\begin{proof}
For any fixed $(l_1, l_2)$, the inner loop variables $(k, \boldsymbol{b}, \boldsymbol{s})$ are strictly constrained by C3, C4 and C5. Therefore, the global feasible state space $\boldsymbol{\Omega} = \{k, \boldsymbol{b}, \boldsymbol{s}\}$ is a strictly bounded and finite discrete set. At iteration $m$, the continuous AO guarantees descent in Theorem \ref{theorem3}, but Algorithm \ref{algo2} introduces bounded localized perturbation, which breaks the guarantee of strict monotonic descent across successive evaluations $t^{(r)}$. To mathematically neutralize this limit-cycle oscillation, Algorithm \ref{algo4} maintains a global historical minimum, represented as
\begin{equation}
    t_{\min}^{(r)} = \min \left( t_{\min}^{(r-1)}, \; t(k^{(r)}, \boldsymbol{b}^{(r)}, \boldsymbol{s}^{(r)}) \right) \le t_{\min}^{(r-1)},
\end{equation}
which is strictly monotonically non-increasing. The physical pipeline execution time is strictly bounded below by the physical boundaries and hardware compute ceilings as $t_{\min}^{(m)} \ge T_{bound} > 0$. According to the monotone convergence theorem, any sequence of real numbers that is monotonically non-increasing and bounded from below must strictly converge to a finite limit. Due to the convergence of $\{t_{\min}^{(m)}\}$, and the algorithm navigates exclusively through the finite discrete state space $\boldsymbol{\Omega}$, the sequence cannot take on infinitely many distinct values. The inner loop enforces a truncation tolerance $|t^{(r)} - t^{(r-1)}| \le \epsilon$. Thus, the iterative sequence must terminate after a finite number of state transitions, stably collapsing onto the discrete stationary configuration yielding $t_{\min}$. 
As a result, Theorem \ref{theorem4} is proved.
\end{proof}
}

In the inner AO process, we initialize the variables and sequentially solve the sub-problems until the convergence criterion $\epsilon$ is met. The complexity of Algorithm \ref{algo3} comprises three main parts. Theoretically, the complexity of the outer loop is $O(L^2)$. But the pruning strategy in Theorem \ref{theorem2} significantly reduces the computational load. By leveraging the UE memory constraints, the number of layers requiring search is compressed to $\gamma$, resulting in a complexity of $O(\gamma^2)$. For the inner AO, the computational complexity is dominated by solving P2 and P3. For convex optimization problems with $n$-dimensional variables, the time complexity typically scales exponentially with the number of UEs. {Therefore, the total algorithm complexity can be expressed as $O(\gamma^2 \log(\frac{1}{\epsilon})\cdot N^3)$.}

\subsection{Adaptive Re-allocation Optimization} \label{section4.3}
Following the initial profiling phase, where the model split layers ($l_1, l_2$) are determined via the aforementioned SPA optimization, the BS broadcasts the initial parameters of the head and tail models to all UEs. Simultaneously, it distributes the allocated batch size $b_i$ and the number of time slots $s_i$ to each respective UE. While the system is expected to operate according to these optimized settings, the inherent time-varying performance of edge devices poses significant challenges during subsequent training rounds. Specifically, the computational capabilities of UEs may fluctuate due to sudden local processes or node failures, and communication rates may vary due to device mobility. These factors can cause the actual parallel training process to deviate from the initial design, necessitating the fault-tolerance mechanism for adaptive allocation.

Considering the limitations on UE performance variations, coupled with the strict requirements for model parameter privacy and transmission overhead, dynamically adjusting model split layers during training is prohibitively expensive. Consequently, we only target the tunable parameters on the UE side, considering $k, b_i, s_i$ as optimization variables for this phase. To this end, a threshold detection mechanism is established before each training round. If the deviation ratio of training time compared to the previous round exceeds a predefined fault-tolerance threshold as 
\begin{equation}
    \frac{t^m-t^{m-1}}{t^{m-1}} > \delta,
\end{equation}
the system deems that a significant performance fluctuation of specific UE has occurred. This triggers ARA optimization at BS before next training round begins, based on the latest parameters information uploaded by UEs and the re-optimized variables are then distributed to them. This optimization problem is formulated as $P4$.
\begin{align} \label{P4}
P4: \quad & \min_{k, \boldsymbol{b}, \boldsymbol{s}} t(k, \boldsymbol{b}, \boldsymbol{s})\\
\text{s.t.} \quad 
& \text{C2: } \sum_{l=1}^{l_1} (u_l + b_i \Delta u_l) + \sum_{l=l_2}^{L} (u_l + b_i \Delta u_l) \le U_i,\nonumber\\
& \text{C3: } 1 \le k\le \min_i b_i, \quad k \in \mathbb{N},\nonumber\\
& \text{C4: } \sum_{i=1}^{n} b_i = B, \quad b_i \in \mathbb{N},\nonumber\\
& \text{C5: } \sum_{i=1}^{n} \tau s_i \le T, \quad s_i \in \mathbb{N}, \nonumber
\end{align}

Since $P4$ constitutes a sub-problem of the original problem $P1$ with $l_1, l_2$ fixed, the ARA algorithm is easily obtained based on AO. Drawing upon the inner loop AO described in Section \ref{section4.2}, we decompose $P4$ into three sub-problems and solve them iteratively by fixing the other variables in the same way. { Consequently, we can also use Algorithm \ref{algo4} to solve this optimization.}

\section{Experimental Results} \label{section5}
In this section, we introduce the experimental parameter settings and present a variety of experimental results to comprehensively evaluate the performance of AC$^2$P$^2$SL. 

\subsection{Experiment Settings} \label{section5.1}
We consider a cellular network with a 500 \text{m} radius, where the BS is centrally located and UEs are randomly distributed. We adopt the line-of-sight (LoS) channel model described in \cite{Los}, characterized by an average path loss exponent of 2.1 and a shadow fading standard deviation of 3.6 \text{dB}. To emulate the heterogeneity performance, the peak FLOPS of each UE is uniformly selected from [100, 200] \text{GFLOPs}. Similarly, the memory bandwidth and maximum storage capacity follow random distributions within [5, 10] \text{GB/s} and [1, 2] \text{GB}. The remaining system parameters are specified in Table \ref{system}.

{ To perform reasonable evaluations, we conduct the common image classification task using the ImageNet-100 dataset, which consists of 100 varied categories selected from the ImageNet dataset \cite{Imagenet}. It contains 129,395 training samples and 5,000 test samples, and each data sample has dimensions of $224 \times 224 \times 3$. Experiments are conducted under both independent and identically distributed (IID) and non-IID conditions. For IID data setting, we randomly shuffle the training dataset and evenly distribute it among all UEs. Under non-IID setting, we use the Dirichlet distribution to distribute the training dataset, with the Dirichlet parameter set to 0.5 to represent the heterogeneity of local terminal data. Meanwhile, we randomly shuffle the test dataset and distribute it evenly among all UEs.

The models adopted for training include ResNet\cite{RESNET} series models and Vision Transformer (ViT)\cite{VIT}.} To determine specific model metrics prior to training, we calculate the FP and BP computational workloads for each network layer on a per-sample basis. Moreover, the fixed memory access volume of FP is represented by the model parameters, while for BP, it corresponds to the sum of model parameters and their gradients. The variable memory access volume is characterized by the data flow: for FP, it involves reading input data and writing output data; for BP, it involves reading the input data and output gradients and writing the input gradients. Regarding memory usage, the fixed component comprises model parameters, parameter gradients, and optimizer states, whereas the variable component consists of the size of intermediate activations generated during FP. 

\begin{table}[t]
\small
\renewcommand{\arraystretch}{1.3}
\centering 
\footnotesize
\caption{System parameter settings}
\label{system}
\begin{tabular}{c|c|c|c}
\hline
\textbf{Parameter} & \textbf{Value} & \textbf{Parameter} & \textbf{Value}\\ 
\hline
${B}$   & 512      & {$N$}    & 8\\
$T$   & 10\,ms   & $\tau$ & 0.125\,ms\\
${BW}$   & 300\,MHz & $\rho$    & 2\\
$p_i$ & [23, 26]\,dBm & $p_0$ & 46\,dBm\\ 
$G_i$ & [1,2]\,dBi      & $G_0$  & 18\,dBi\\   
$N_0$ & -174\,dBm/Hz & $U_i$ & [1,2]\,GB\\
$F_i$ & [100, 200]\,GFLOPS & $F_0$ & 14\,TFLOPS\\ 
$\beta_i$ & [10, 20]\,GB/s & $\beta_0$ & 900\,GB/s\\ 
\hline
\end{tabular}
\end{table}

\subsection{Performance Evaluation}  \label{subsec:performance_eval}
\subsubsection{Performance of Different Training Schemes}

Table \ref{training_time} presents the training time per-round of various models, obtained under the same configurations of the system parameter. Regarding the benchmarks, the comparative analysis includes not only the USL-based two-layer splitting paradigm (including USL, UPSL, USFL and HFSL), but also the vanilla SL-based single-layer splitting paradigm (including PSL, SFL, and EPSL). It should be noted that we also use AC$^2$P$^2$SL based on the single-layer splitting paradigm as a benchmark to evaluate the trade-off between privacy and speed. Furthermore, we consider centralized learning (CL) as an additional baseline, which does not involve model partitioning. In CL, UEs upload raw input data to the BS, where the complete model training process is executed. 

\begin{table}[t]
\small
\centering
\footnotesize
\renewcommand{\arraystretch}{1.5}
\caption{Average per-round training time (s)}
\begin{tabular}{ C{0.35cm} | C{1.4cm} | C{1.2cm} | C{1.2cm} | C{1.35cm} | C{0.8cm} }
\hline
\multicolumn{2}{c|}{Schemes} & ResNet18 & ResNet50 & ResNet101 & ViT \\ \hline
\multirow{4}{*}{SL} 
 & PSL\cite{PSL}  & 3.963 & 4.466 & 5.283 & 6.467 \\ \cline{2-6}
 & SFL\cite{ASFL}  & 3.713 & 4.302 & 5.120 & 6.315 \\ \cline{2-6}
 & EPSL\cite{EPSL} & 3.562 & 3.791 & 4.132 & 4.171 \\ \cline{2-6}
 & {AC$^2$P$^2$SL\footnotemark} & {1.458} & {1.833} & {1.946} & {2.500} \\ \hline
 & USL\cite{SL}   & 5.955 & 6.483 & 7.300 & 10.599 \\ \cline{2-6}
 & UPSL\cite{UPSL}  & 3.971 & 4.450 & 5.316 & 8.877 \\ \cline{2-6}
 & USFL\cite{VUSFL}  & 3.723 & 4.324 & 5.141 & 8.730 \\ \cline{2-6}
 & {HFSL\cite{GAN}}  & {3.238} & {3.870} & {4.687} & {7.851} \\ \cline{2-6}
\multirow{-5}{*}{USL}
 & \textbf{AC$^2$P$^2$SL} & \textbf{2.139} & \textbf{3.024} & \textbf{3.132} & \textbf{3.497} \\ \hline
\multicolumn{2}{c|}{CL} & 0.580 & 1.083 & 1.900 & 3.883 \\ \hline
\end{tabular}
\label{training_time}
\end{table}
\footnotetext{{ AC$^2$P$^2$SL based on single-layer splitting paradigm.}}

\begin{figure}[t]
    \centering
    \includegraphics[width=0.45\textwidth]{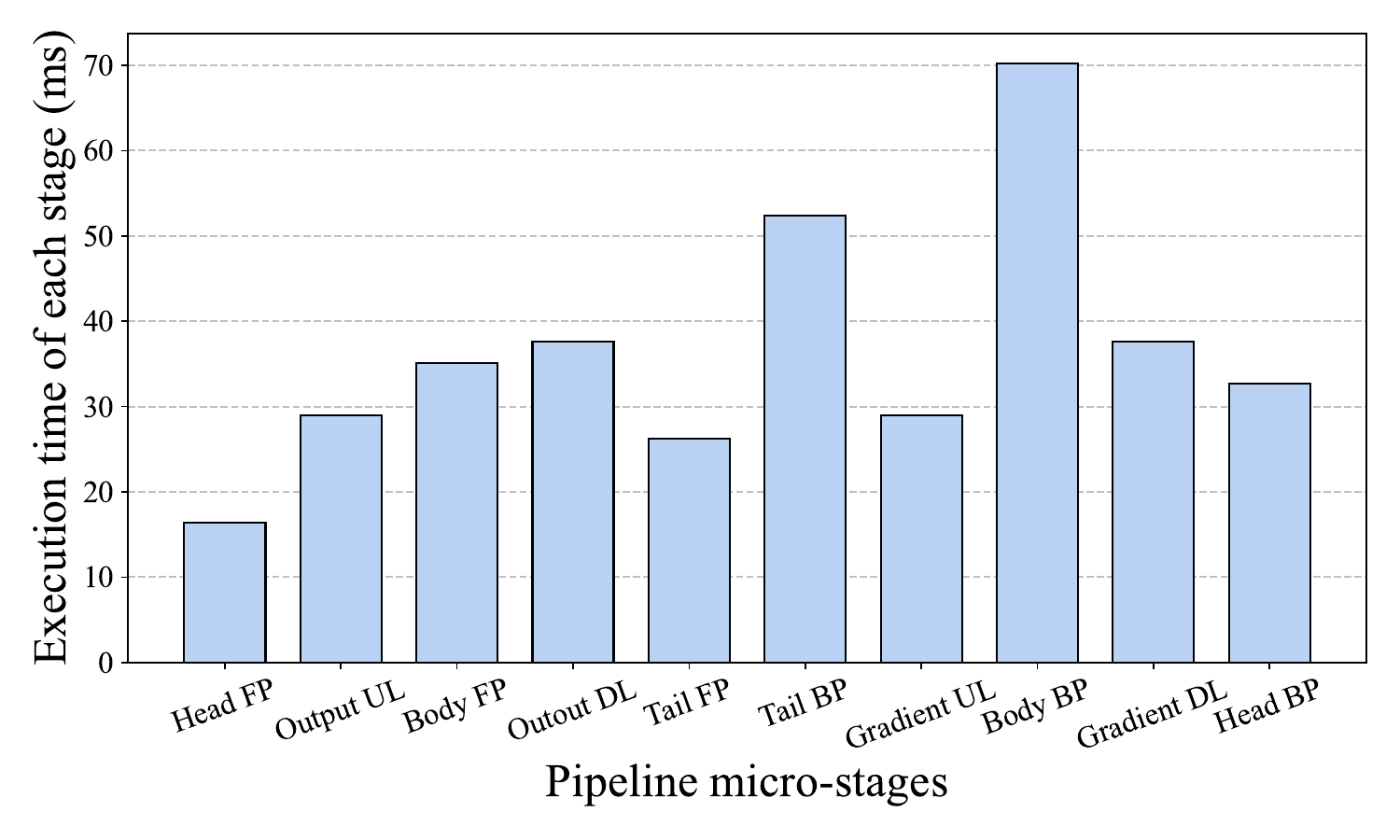}
    \caption{Execution time of pipeline micro-stages for a single micro-batch.}
    \label{times}
\end{figure}

Obviously, whether operating under the single-layer SL or the two-layer USL paradigm, our proposed AC$^2$P$^2$SL framework significantly reduces the training latency across various models compared to their respective baseline methods. This comparison explicitly highlights that the training acceleration enabled by AC$^2$P$^2$SL possesses exceptional effectiveness and generalizability regardless of SL or USL. 
Compared to the USL paradigm, both SL and CL schemes compromise data privacy to varying degrees in exchange for improved training efficiency. However, our proposed AC$^2$P$^2$SL, anchored in USL, significantly reduces training latency even compared to SL methods while fully preserving data privacy.
Notably, this reduction becomes more pronounced as model complexity increases. For instance, when training the ViT model, our scheme achieves a lower training latency than even the CL paradigm. This surge in CL training time can be attributed to the sequential nature of data uploading and subsequent computation processes. For complex models, the substantial computational overhead significantly inflates the total training duration.     
Moreover, a horizontal comparison between AC$^2$P$^2$SL under SL and USL reveals that the single-layer splitting indeed achieve a further reduction in training time by trading off the label privacy. Both of them are significantly better than the baselines at the same privacy level. The single-layer splitting can be taken as a special case of the two-layer splitting when the split layers coincide (i.e. $l_1=l_2$). Therefore, depending on the specific privacy requirements of different deployment scenarios, we can rationally select either the single-layer or two-layer splitting paradigm of AC$^2$P$^2$SL for their training. 

To elucidate the underlying mechanism, Fig. \ref{times} illustrates the duration of the individual micro-stage training ViT introduced in Section \ref{section4.1}. It is evident that, following the SPA optimization, the durations of these micro-stages become relatively balanced. Attributed to the parallelism between communication and computation in the AC$^2$P$^2$SL scheme, the additional computational overhead effectively overlaps with the communication time. Consequently, the proposed AC$^2$P$^2$SL achieves performance that is, perhaps counter-intuitively, superior to CL, thereby demonstrating the significant efficacy of our training approach.

\begin{figure}[t]
\centering
\begin{subfigure}{0.65\linewidth}
  \centering
  \includegraphics[width=\linewidth]{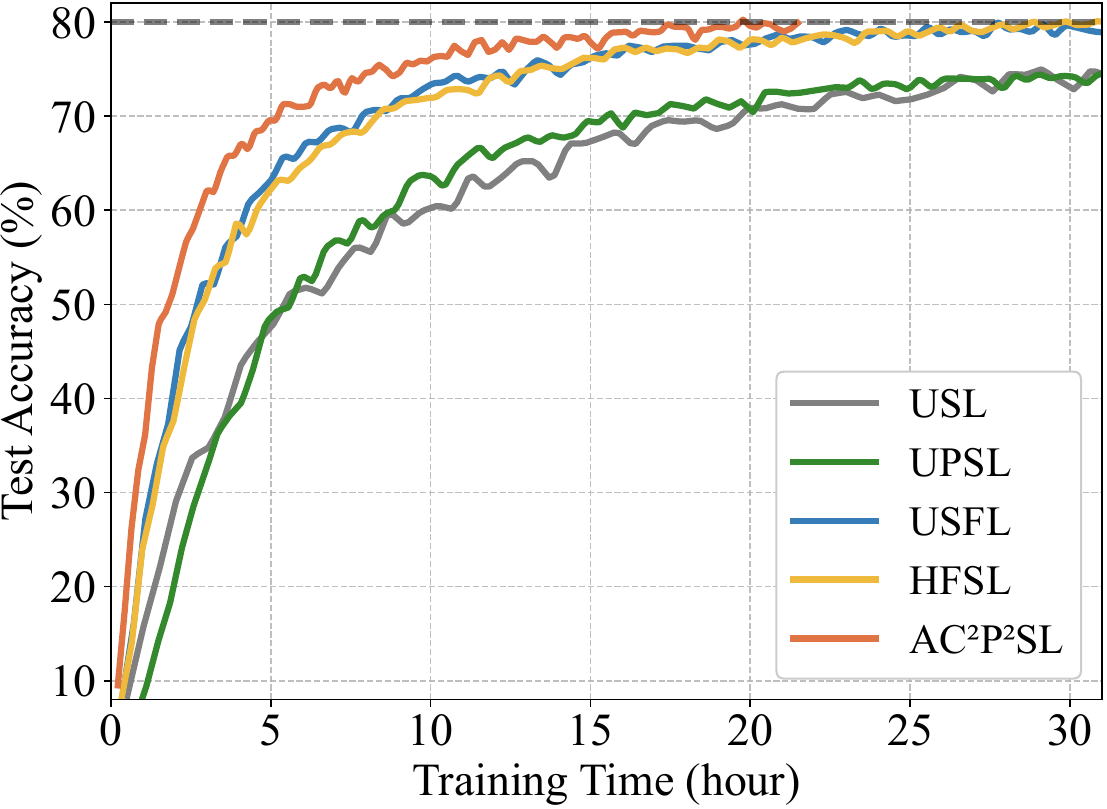}
  \caption{{IID}}
  \label{IID}
\end{subfigure}
\begin{subfigure}{0.65\linewidth}
  \centering
  \includegraphics[width=\linewidth]{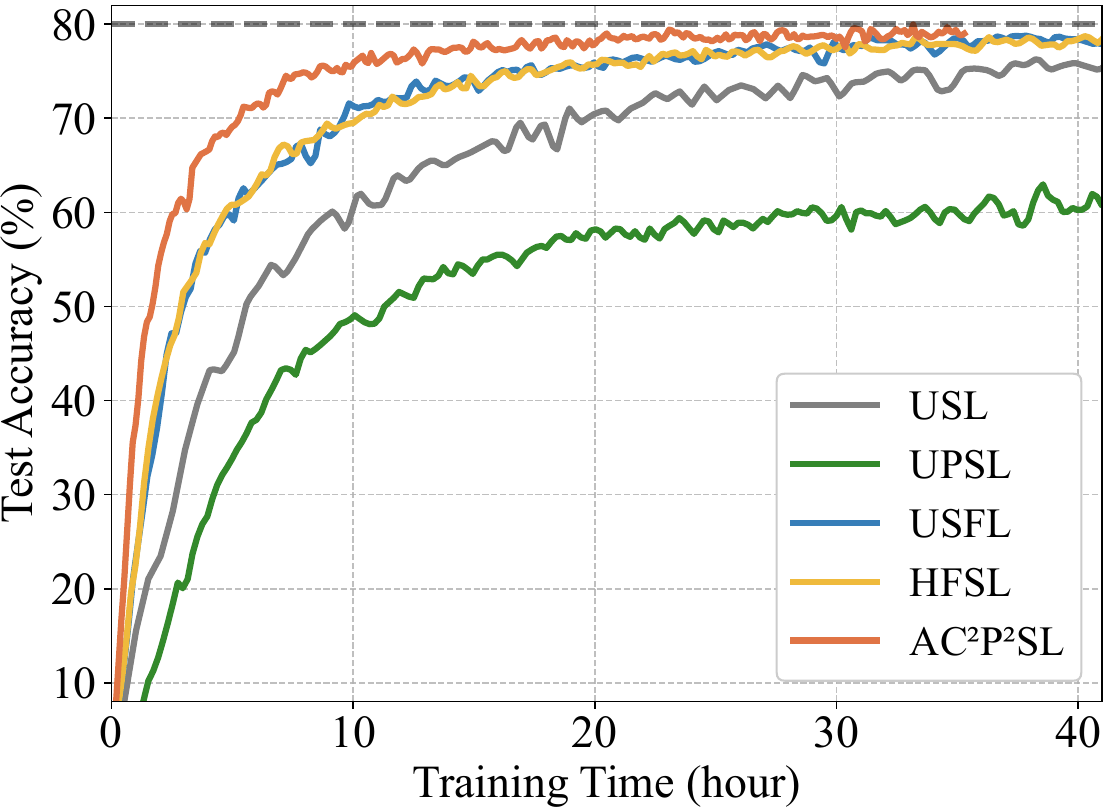}
  \caption{{Non-IID}}
  \label{Non-IID}
\end{subfigure}
\caption{{Test accuracy of ResNet-101 under IID and non-IID settings versus training time.}}
\vspace{-0.3cm}
\label{accuracy}
\end{figure}

Figure \ref{accuracy} presents the test accuracy under IID and non-IID versus training time, training on the ImageNet-100 dataset using ResNet-101 as a representative example to intuitively illustrate the convergence behaviors of USL-based schemes. Under non-IID heterogeneous data distributions, the accuracy is slightly reduced at identical training time stamps compared to the IID setting. However, as the epochs progress, the accuracies gradually converge. In the baseline schemes, UPSL eliminates the parameter sharing of head and tail models among UEs severely restricting the training efficacy, whereas HFSL only performs a parameter update after continuous backpropagation over two consecutive rounds of data. These steps negatively impact their convergence trajectories, ultimately prolonging the total training time required to reach the target accuracy. In contrast, AC$^2$P$^2$SL strictly preserves the mathematical equivalence of the per-round parameter updates, ensuring zero adverse effects on the model convergence process. Consequently, under AC$^2$P$^2$SL, the number of epochs required to converge remains virtually identical to that of standard USFL. The reduction ratio achieved in per-round training time is close to the reduction ratio in the overall training convergence time, as explicitly demonstrated by the accuracy convergence curves.

\begin{figure}[t]
\centering
\begin{subfigure}{0.8\linewidth}
  \centering
  \includegraphics[width=\linewidth]{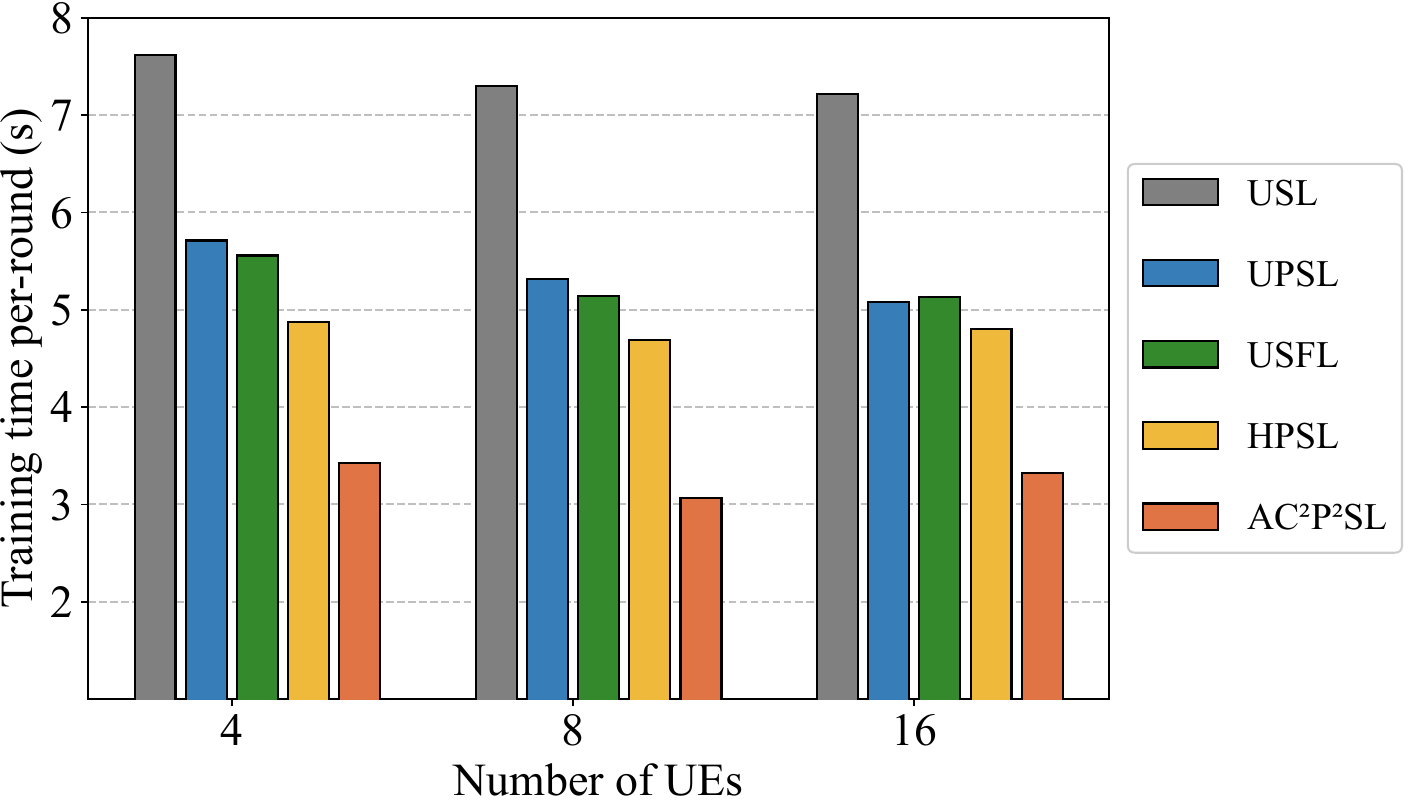}
  \caption{{ResNet-101}}
  \label{num_r101}
\end{subfigure}
\begin{subfigure}{0.8\linewidth}
  \centering
  \includegraphics[width=\linewidth]{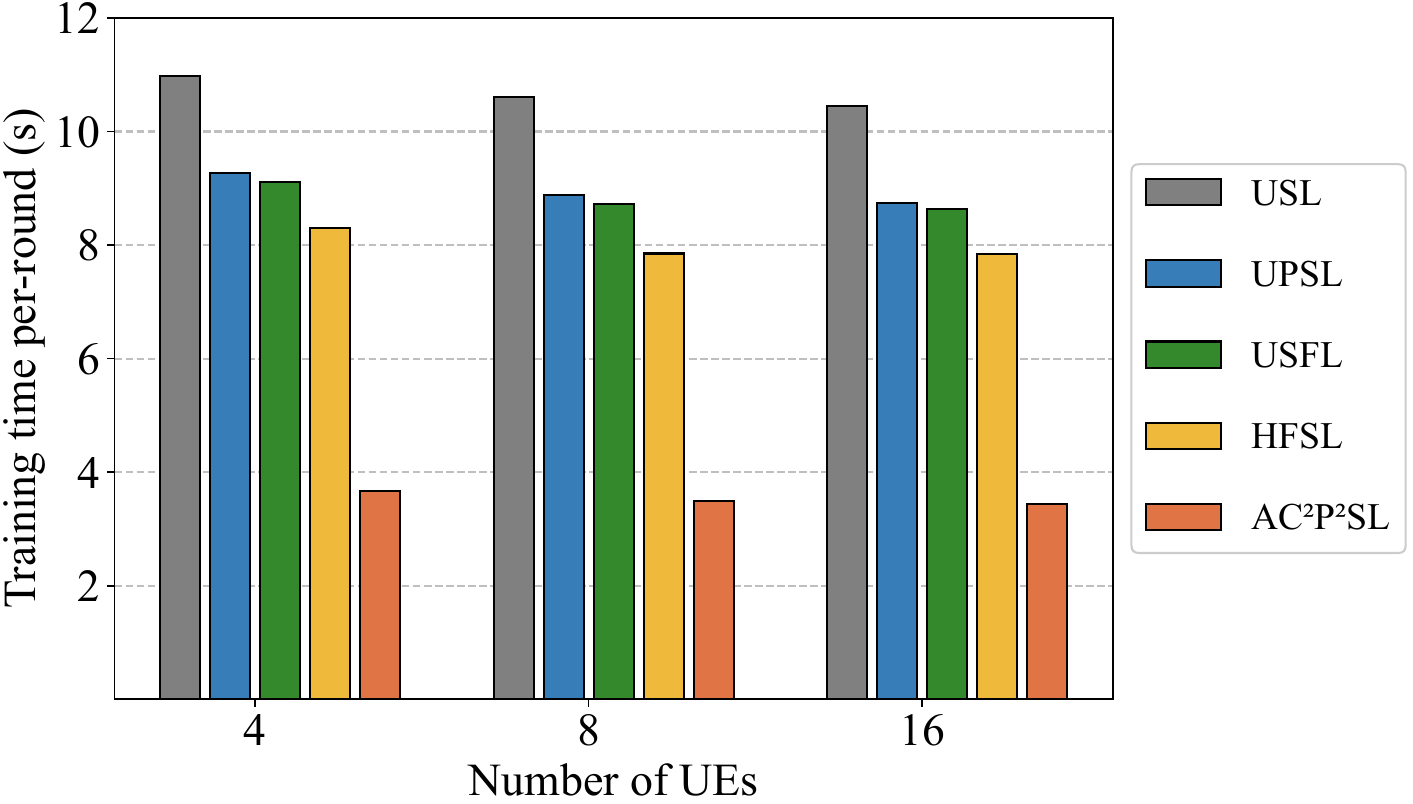}
  \caption{Vision Transformer}
  \label{num_vit}
\end{subfigure}
\caption{Training time under different numbers of UEs.}
\label{number}
\end{figure}

\subsubsection{Performance under Different Numbers of UEs}
{ To demonstrate the scalability of our proposed scheme, we conduct experiments under different numbers of UE. Fig. \ref{number} presents a comparative analysis of the model training latency for various schemes under different UE numbers training ResNet-101 and ViT models. The results unequivocally indicate that AC$^2$P$^2$SL consistently maintains significantly lower training latency compared to the benchmarks, even achieving an average reduction of over 60\% relative to baselines of ViT. This substantial reduction underscores a marked improvement in training efficiency, and the observed consistency validates the robustness of the performance gains yielded by AC$^2$P$^2$SL across different scales of UE deployment.

Given that the global batch size is fixed, an increase in the number of UEs leads to a proportional decrease in the local data batch size allocated to each UE, thereby significantly reducing the local computation time. Conversely, the communication capability of each UE diminishes due to the reduction in the number of allocated time slots per user. The interplay of these opposing factors results in the total training time exhibiting only a marginal decrease as the number of UEs increases.
However, when considering heterogeneous hardware  across different UEs, this general downward trend may experience minor localized fluctuations. As observed in Figure \ref{num_r101} with the number of UEs from 8 to 16, a few UEs with extreme hardware performance limitations might slightly prolong the total time. Nevertheless, the experimental comparisons verify that such fluctuations are overall negligible. This robust stability primarily stems from the rational resource allocation strategy embedded in our SPA algorithm, demonstrating that our framework is highly scalable and resilient to straggler effects.
}

\begin{figure}[t]
\centering
\begin{subfigure}{0.65\linewidth}
  \centering
  \includegraphics[width=\linewidth]{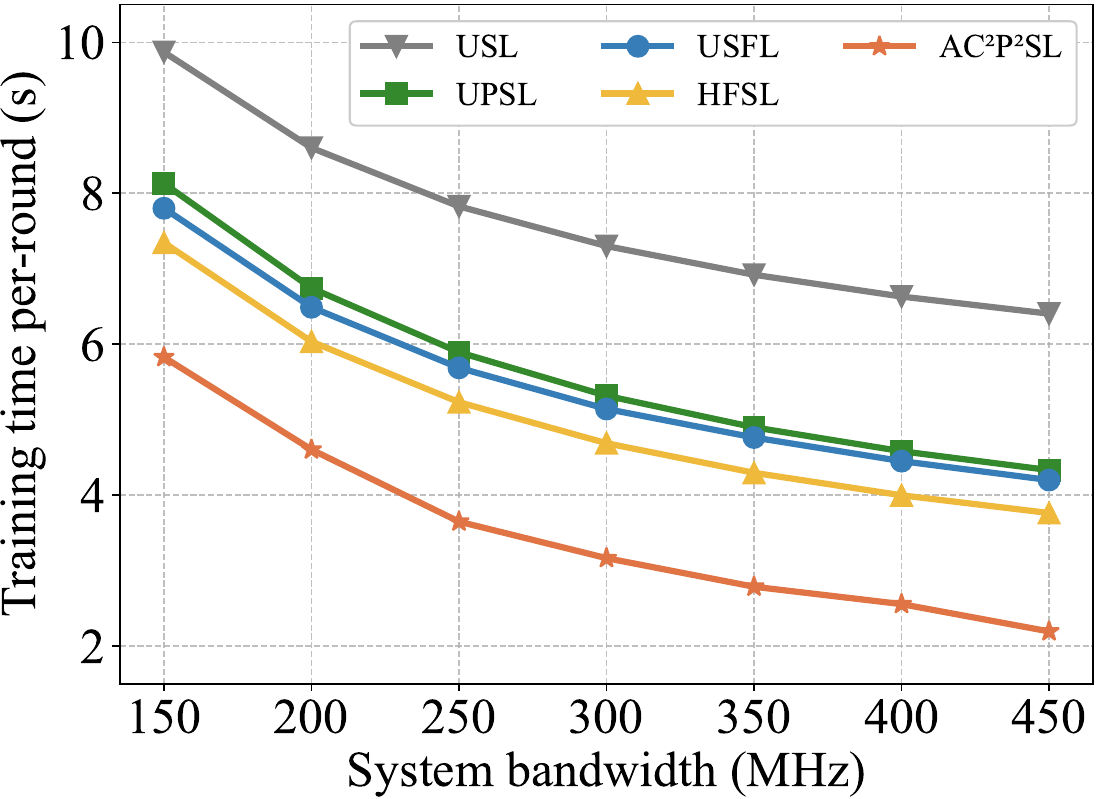}
  \caption{{ResNet-101}}
  \label{band_r101}
\end{subfigure}
\begin{subfigure}{0.65\linewidth}
  \centering
  \includegraphics[width=\linewidth]{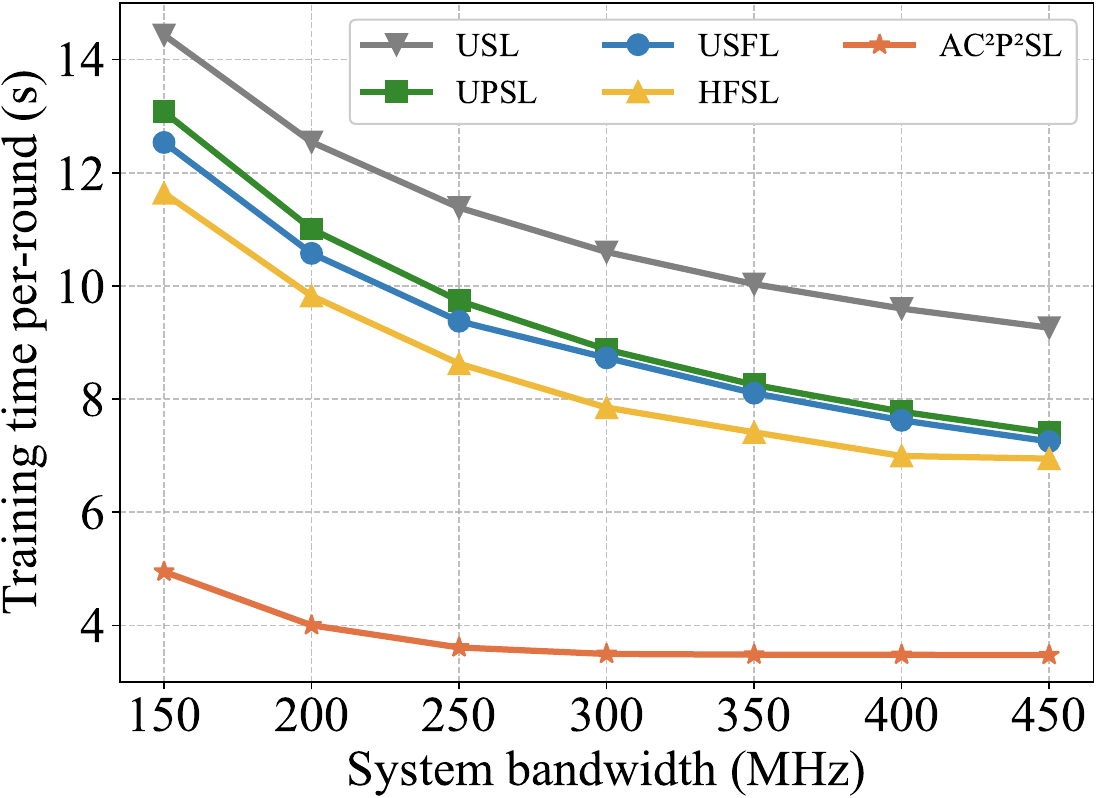}
  \caption{Vision Tranformer}
  \label{band_vit}
\end{subfigure}
\caption{Training time versus the system bandwidth.}
\vspace{-0.2cm}
\label{bandwidth}
\end{figure}

\subsubsection{Robustness against System Bandwidth}
To further validate the robustness of AC$^2$P$^2$SL, we emulate dynamic wireless channel conditions by varying the total system bandwidth, thereby yielding diverse uplink and downlink transmission rates. { Fig. \ref{bandwidth} presents a comparative analysis of the single-round training latency for various USL-based schemes across a range of system bandwidths using ResNet-101 and ViT models. The results indicate that our proposed scheme consistently maintains a huge reduction in training latency across the bandwidth spectrum of 150 MHz to 450 MHz, even all over 50\% in ViT, demonstrating exceptional robustness under varying communication conditions.} Notably, the performance gain is more pronounced in lower system bandwidth regimes when training ViT. This is primarily attributed to the fact that reduced bandwidth results in lower transmission rates and consequently higher communication latency, which significantly prolongs the training duration for sequentially executed benchmark methods. In contrast, leveraging communication-computation pipeline parallelism, the proposed scheme effectively mitigates this latency increase by overlapping tasks. This evidence further substantiates the robustness of our pipeline parallel strategy. In addition, as the bandwidth is large enough, training time gradually converges, indicating that training time is dominated by computation time. In other words, the parallel communication time through the pipeline is masked within the computation time, which significantly improves the training efficiency.

\begin{figure}[t]
\centering
\begin{subfigure}{0.65\linewidth}
  \centering
  \includegraphics[width=\linewidth]{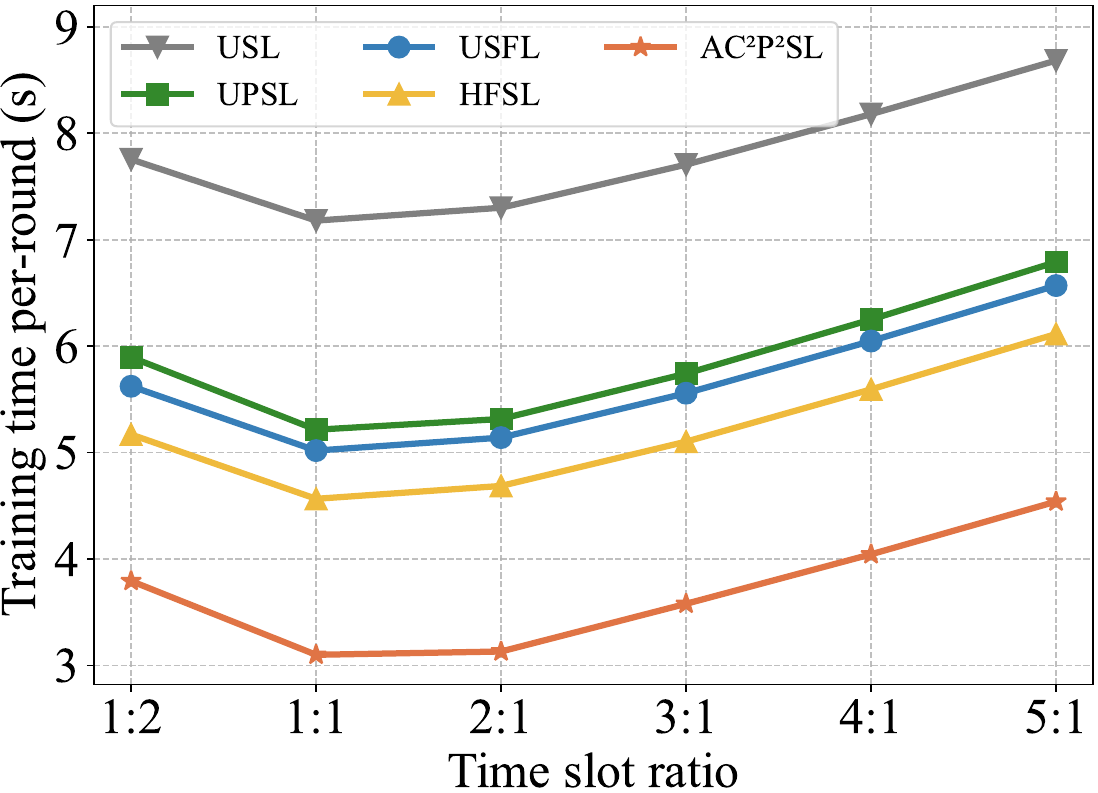}
  \caption{{ResNet-101}}
  \label{slot_r101}
\end{subfigure}
\begin{subfigure}{0.65\linewidth}
  \centering
  \includegraphics[width=\linewidth]{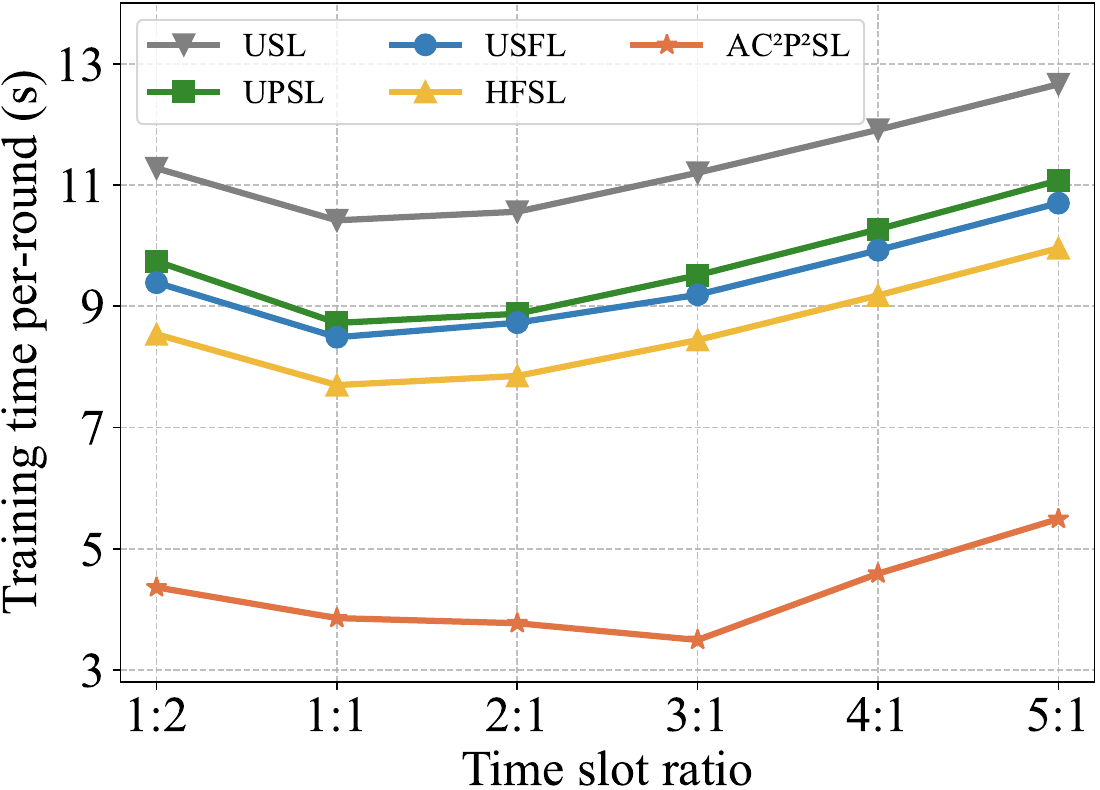}
  \caption{Vision Tranformer}
  \label{slot_vit}
\end{subfigure}
\caption{Training time versus the time slot ratio of uplink to downlink.}
\vspace{-0.2cm}
\label{timeslot}
\end{figure}

\subsubsection{Optimal uplink-downlink time slot ratio}
In our system modeling, we adopted a unified ratio $\rho$, to represent the proportion of uplink to downlink time slots. However, in practical deployments, the communication process is inherently constrained by specific time slot configuration protocols. Consequently, it is necessary to determine the optimal $\rho$ by aligning our model with the configuration rules of practical systems. { Fig. \ref{timeslot} illustrates the per-round training time for various USL-based schemes under different time slot ratios using ResNet-101 and ViT models. In contrast to the conventional USL schemes, which achieve optimal performance at 1:1 ratio, the proposed AC$^2$P$^2$SL framework approaches its optimum at a ratio of approximately 2:1 and 3:1 when training ResNet-101 and ViT, respectively.} This distinct behavior is attributed to the underlying mechanism of pipeline parallelism. In standard pipeline theory, the degree of task overlap is maximized when the durations of respective micro-stages are identical. Consequently, AC$^2$P$^2$SL necessitates adjusting the time slot ratio to equalize the durations of uplink and downlink transmission. Specifically, the system approaches optimal performance when these two transmission times are roughly equivalent. This implies that the configured time slot ratio should be inversely proportional to the theoretical uplink-to-downlink data rate ratio—a theoretical deduction that is entirely consistent with our experimental observations. Furthermore, Fig. \ref{timeslot} also demonstrates that even under extreme ratio configurations, AC$^2$P$^2$SL continues to significantly outperform the comparative benchmarks, highlighting the broad applicability and effectiveness of our proposed scheme.

\subsection{Ablation Studies} \label{subsec:ablation_study}
\begin{figure}[t]
    \centering
    \includegraphics[width=0.8\linewidth]{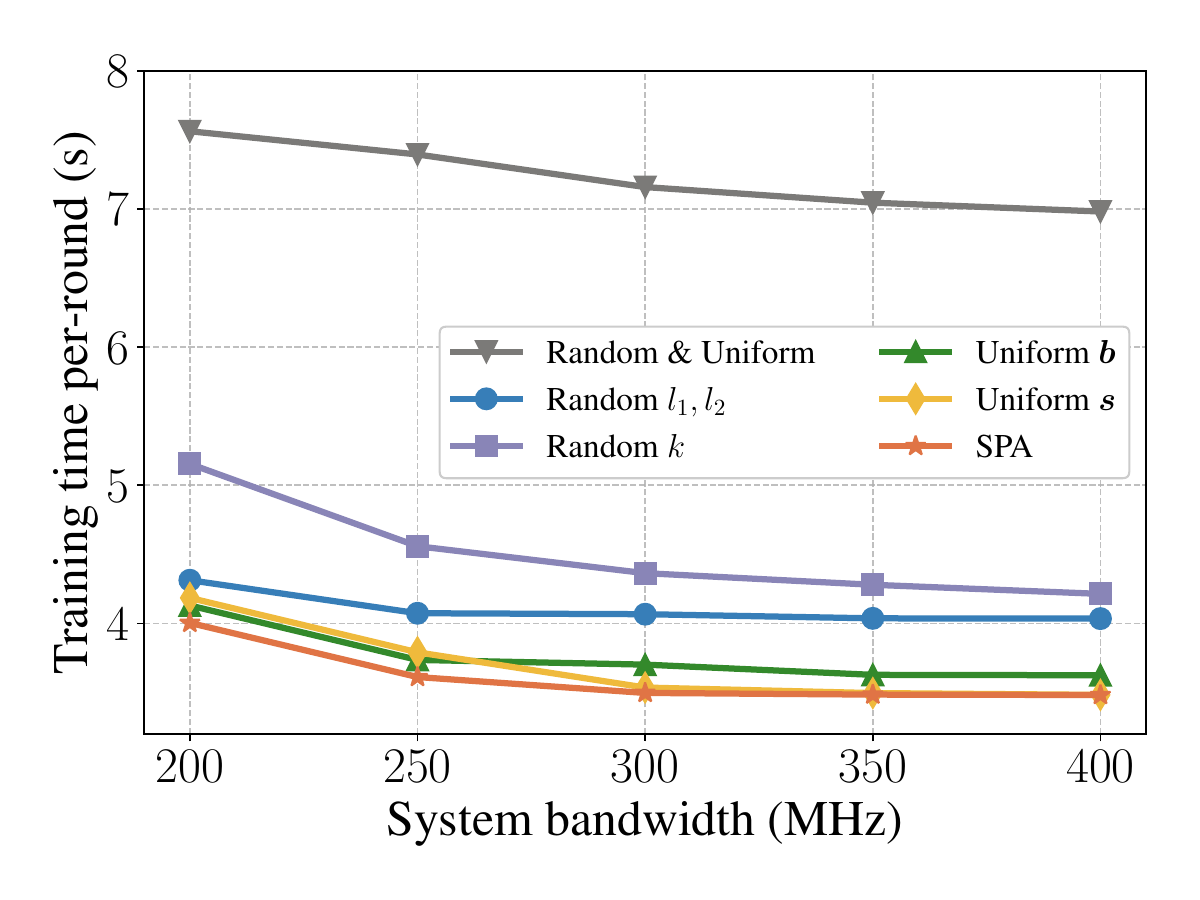}
    \caption{Training time of different baselines under varying bandwidths.}
    \vspace{-0.1cm}
    \label{spa}
\end{figure}
\subsubsection{Ablation Study on SPA Optimization}
To validate the effectiveness of the proposed SPA algorithm during the training process, we conduct an ablation study by comparing it against the following baselines: (i) The split layers $l_1, l_2$ and the number of micro-batches $k$ are determined randomly, while the batch size set $\boldsymbol{b}$ and time slot set $\boldsymbol{s}$ are uniformly allocated across all users; (ii) $l_1, l_2$ are determined randomly, while $k, \boldsymbol{b}, \boldsymbol{s}$ are optimized using the ARA algorithm; (iii) $k$ is determined randomly, while $l_1, l_2, \boldsymbol{b}, \boldsymbol{s}$ are optimized via the SPA; (iv) The batch sizes $\boldsymbol{b}$ are uniformly allocated, while $l_1, l_2, k, \boldsymbol{s}$ are optimized via the SPA; (v) The time slots $\boldsymbol{s}$ are uniformly allocated, while $l_1, l_2, k, \boldsymbol{b}$ are optimized via the SPA. It is noteworthy that, although the split layers and the number of micro-batches in these comparison schemes are determined randomly, they strictly adhere to the constraints defined in (\ref{P1}). Specifically, the selection is confined to the subset of layers that satisfy the pruning strategy. In essence, by systematically isolating specific variables, we investigate their individual impacts on training time.

Fig. \ref{spa} illustrates the single-round training latency for each scheme across varying system bandwidths. While the fully random/uniform strategy yields performance significantly inferior to the other optimized schemes, a cross-comparison with Fig. \ref{bandwidth} reveals a critical insight: even without the SPA, the standalone implementation of the communication-computation pipeline parallel mechanism still maintains a performance advantage over other conventional USL-based paradigms. Observing the other comparison schemes, it becomes evident that the model split points and the number of micro-batches directly dictate the degree of pipeline overlap. Consequently, they exert a more substantial influence on the training duration, a finding that aligns with our theoretical expectations. Furthermore, the SPA demonstrates a distinct performance improvement over methods employing uniform allocation for $\boldsymbol{b}$ and $\boldsymbol{s}$. This improvement is primarily attributed to the handling of device heterogeneity among UEs. In uniform allocation scenarios, faster UEs may induce synchronization latency at the BS side. The SPA effectively mitigates this issue through its alternating optimization process.

\subsubsection{Ablation Study on ARA Optimization}
To further validate the effectiveness of the ARA optimization algorithm, we conduct a comparative analysis of the multi-round training time of the AC$^2$P$^2$SL framework under varying re-allocation thresholds $\delta$. The results are depicted in Fig. \ref{ra}. To ensure the observability of the experimental results, we introduce systematic performance degradation for a specific UE during the 1st, 6th, 11th, and 16th training rounds. Specifically, we incrementally increase the distance between the UE and the BS, while simultaneously decreasing the peak FLOPS, bandwidth, and transmit power. This setup is designed to emulate the realistic fluctuations of communication capabilities induced by UE mobility and the weakening of computational capabilities resulting from local background computing tasks.

In the comparative analysis, the case where $\delta = \infty$ represents the baseline without ARA strategy, whereas $\delta = 0$ implies that re-allocation is executed in every round subsequent to the initial one. For the remaining schemes, the decision to trigger re-allocation is contingent upon whether the temporal variation in the preceding round exceeds the specified threshold. It is evident from the figure that, compared to the baseline $\delta = \infty$, the deployment of our ARA strategy significantly mitigates the adverse impacts of UE performance fluctuations. Furthermore, a lower threshold facilitates more timely resource adjustments, thereby effectively reducing the total training latency. This demonstrates the efficacy of our proposed adaptive resilience mechanism. However, it is worth noting that a lower threshold does not necessarily yield superior performance, as the re-allocation optimization process itself incurs computational overhead. This finding provides empirical guidance for selecting the most appropriate re-allocation threshold to balance optimization gains against computational costs.

\begin{figure}[t]
    \centering
    \includegraphics[width=0.8\linewidth]{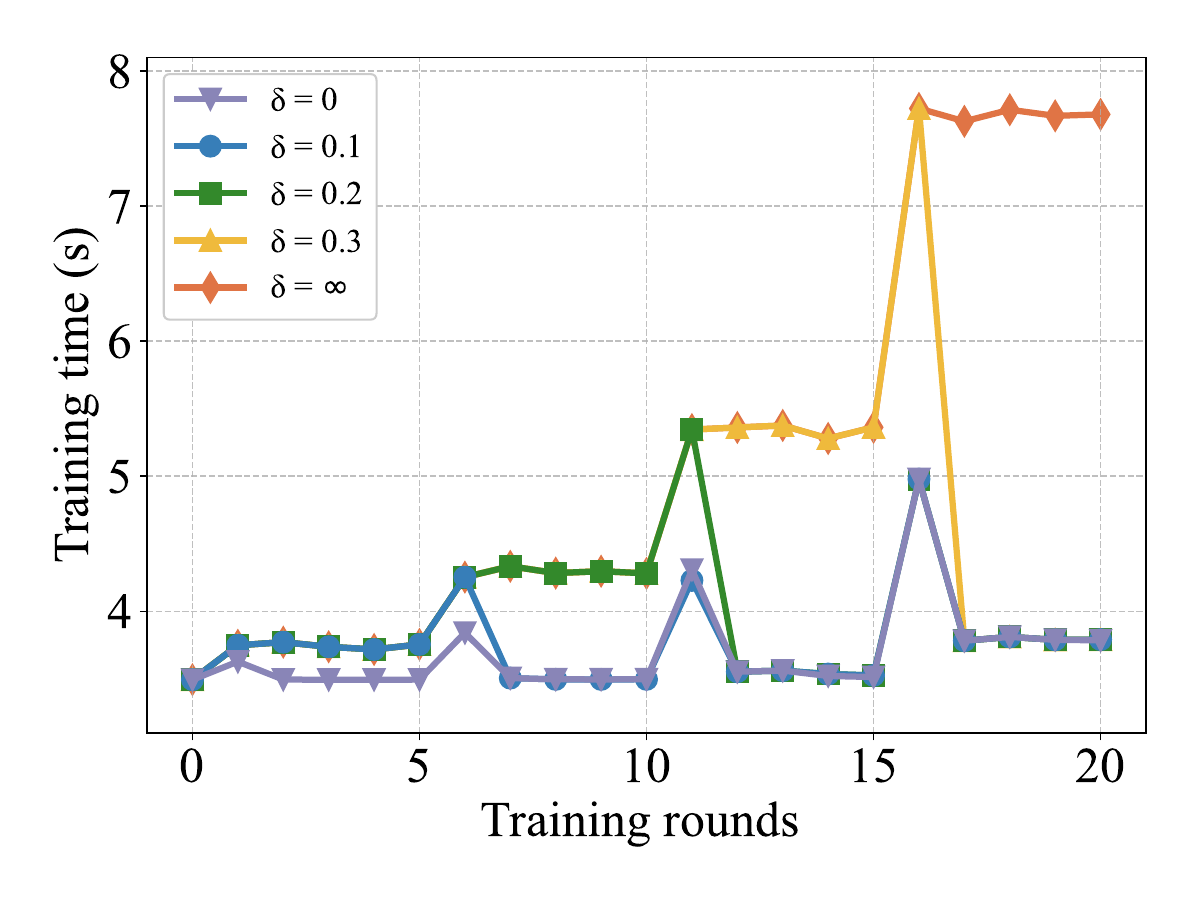}
    \caption{Training time under different re-allocation thresholds with UE parameters randomly changing in specific rounds.}
    \vspace{-0.1cm}
    \label{ra}
\end{figure}

\section{Conclusions}\label{section6}
In this paper, we propose an efficient framework named AC$^2$P$^2$SL for split learning in wireless edge networks. 
Analyzing system constraints across communication, computation, and storage dimensions, we formulate a pipeline training time minimization problem. 
Therefore, we design the SPA algorithm and the ARA strategy to solve this. 
In training workflow, AC$^2$P$^2$SL overlaps the communication and computation tasks within the U-shaped paradigm by communication-computation pipeline parallelism, significantly reducing training time. 
Simulation experiments validate the effectiveness and robustness under different system parameters, and ablation experiments also demonstrate the application effects of pre- and re-allocation algorithms.

In our work, by introducing pipeline parallelism into wireless transmission, communication can be initiated in advance before computation is complete. This overlap reduces waiting delay, which is applicable to any different multiple access transmission method.
Notably, we observe that for complex models, our approach not only ensures strict privacy preservation but also outperforms centralized learning to a certain extent. This finding offers a compelling and effective paradigm for collaborative model training between edge servers and terminal devices.

\bibliographystyle{IEEEtran}
\bibliography{IEEEabrv.bib, myabrv.bib, ref.bib}

\end{document}